%% file: main.tex
\documentclass[copyright,creativecommons]{eptcs}
\providecommand{\event}{GandALF 2026} 

\usepackage{iftex}

\ifpdf
  \usepackage{underscore}         
  \usepackage[T1]{fontenc}        
\else
  \usepackage{breakurl}           
\fi

\title{Non-Wellfounded and Cyclic Proofs for LTL:\\ A Syntactic Correspondence with Linear Nested Sequents}
\author{Tim S. Lyon
\institute{TU Dresden \\ Nöthnitzer Straße 46, 01187 Dresden, Germany}
\email{\quad timothy\_stephen.lyon@tu-dresden.de}
\and
Lukas Zenger\thanks{Supported by the Swiss National Science Foundation, Grant P500-2\_235387}
\institute{Peking University\\
5 Yiheyuan Rd., Haidian District,
100871 Beijing,
China}
\email{\quad lukas.zenger@pku.edu.cn}
}
\def\titlerunning{Non-Wellfounded and Cyclic Proofs for LTL}
\def\authorrunning{T.S. Lyon \& L. Zenger}

\input{commands}

\begin{document}
\maketitle

\begin{abstract}
We introduce the formalism of non-wellfounded and cyclic linear nested sequent calculi, developing concrete systems for linear temporal logic ($\ltl$). The paper addresses two central problems, which we call \emph{cycle recognition} and \emph{unraveling}. Cycle recognition concerns identifying cycles in non-wellfounded proofs in order to extract corresponding cyclic proofs, while unraveling studies the converse transformation, from cyclic proofs to non-wellfounded ones. Although these processes are well understood for Gentzen sequents, they have received little attention for more expressive sequent formalisms and become more challenging in the linear nested sequent setting. To address cycle recognition, we show the completeness of non-wellfounded proofs relative to a particular normal form exhibiting a property we call \emph{saturation recurrence}, which enables the systematic extraction of cyclic proofs. To address unraveling, we introduce a specialized procedure that shifts rule applications forward along linear nested sequents, allowing non-wellfounded proofs to be reconstructed from cyclic ones. Overall, our work provides new proof-theoretic techniques for cycle recognition and unraveling in expressive multisequent formalisms.
\end{abstract}

\section{Introduction}

\input{sec-intro}

\section{Logical Preliminaries: Linear Temporal Logic}\label{sec:prelims}

\input{sec-logical-prelims}

\section{Linear Nested Sequents}\label{sec:calculi}

\input{sec-calculus}



\section{Correspondence between Non-Wellfounded and Cyclic Proofs}\label{sec:correspondence}

\input{sec-correspondence}




\section{Concluding Remarks}\label{sec:conclusion}

\input{sec-conclusion}

\bibliographystyle{eptcs}
\bibliography{bibliography}









\end{document}

%% file: commands.tex
\usepackage{amssymb, amsmath, amsthm}
\usepackage{color}
\usepackage{bussproofs}
\usepackage{proof}
\usepackage{mathtools}
\usepackage{relsize}
\usepackage{xspace}
\usepackage{upgreek}
\usepackage{multicol}
\usepackage{stmaryrd}
\usepackage{mathrsfs}
\usepackage{tikz}
\usepackage{graphicx}
\usetikzlibrary{decorations.pathmorphing}

\usepackage{algpseudocode}
\usepackage[ruled,vlined,linesnumbered]{algorithm2e}
\usepackage{mathdots}

\newtheorem{theorem}{Theorem}[section]
\newtheorem{lemma}[theorem]{Lemma}
\newtheorem{definition}[theorem]{Definition}
\newtheorem{example}[theorem]{Example}
\newtheorem{corollary}[theorem]{Corollary}
\newtheorem{remark}[theorem]{Remark}

\newcommand{\case}[1]{%
  \smallskip
  \noindent\textsc{#1}
}

\newcommand{\fig}{Figure}

\newcommand{\iffi}{iff }
\def\phi{\varphi}

\definecolor{tim}{RGB}{0, 0, 250}

\newcommand{\sufo}[1]{\mathrm{sufo}(#1)}

\newcommand{\imp}{\rightarrow}
\newcommand{\lang}{\mathscr{L}}

\newcommand{\atm}{\mathsf{Atm}}

\newcommand{\X}{\mathtt{X}}

\newcommand{\until}{\mathtt{U}}

\newcommand{\ltl}{\mathsf{LTL}}

\newcommand\citet[1]{\citeauthor{#1}~(\citeyear{#1})}

\newcommand{\lnsg}{\mathcal{G}}
\newcommand{\lnsh}{\mathcal{H}}
\newcommand{\lnsk}{\mathcal{K}}
\newcommand{\lnsf}{\mathcal{F}}

\newcommand{\lnsq}{\mathcal{Q}}
\newcommand{\sep}{\sslash}
\newcommand{\id}{\mathsf{id}}
\newcommand{\botl}{\bot\mathsf{L}}

\newcommand{\disr}{\lor \mathsf{R}}
\newcommand{\conr}{\land\mathsf{R}}

\newcommand{\impl}{{\imp}\mathsf{L}}
\newcommand{\impr}{{\imp}\mathsf{R}}
\newcommand{\nextri}{\X \mathsf{R}_{1}}
\newcommand{\nextrii}{\X \mathsf{R}_{2}}
\newcommand{\nextli}{\X \mathsf{L}_{1}}
\newcommand{\nextlii}{\X \mathsf{L}_{2}}

\newcommand{\untilli}{\until \mathsf{L}_{1}}
\newcommand{\untillii}{\until \mathsf{L}_{2}}

\newcommand{\untilri}{\until \mathsf{R}_{1}}
\newcommand{\untilrii}{\until \mathsf{R}_{2}}

\newcommand{\calc}{\mathsf{LNS_{LTL}}}

\newcommand{\calci}{\mathsf{LNS^{\infty}_{LTL}}}
\newcommand{\calccyc}{\mathsf{LNS^{cyc}_{LTL}}}
\newcommand{\sts}{\bar{\sigma}}

\newcommand{\sar}{\vdash}

\newcommand{\prf}{\pi}
\newcommand{\fint}{f}

\newcommand{\tuple}[1]{( #1 )}

\newcommand{\sat}{\vDash}

\newcommand{\der}{\pi} 

\newcommand{\set}[1]{\{#1\}}

\newcommand{\branch}{\mathcal{B}}
\newcommand{\lnspath}{\mathcal{P}}

\newcommand{\comp}[1]{\mathsf{cp}(#1)}

\newcommand{\last}[1]{\mathsf{end}(#1)}

\newcommand{\model}{\bar{\sigma}}

\newcommand{\proves}{\Vdash}

\newcommand{\size}[1]{\left\Vert#1\right\Vert}

\newcommand{\ru}{\mathsf{r}}

\newcommand{\shift}{h} 

\newcommand{\nshole}{\{\cdot\}}

\newcommand{\meq}{=_{M}}

\newcommand{\idxset}{I}

\newcommand{\an}{\mathsf{a}}
\newcommand{\anb}{\mathsf{b}}

\newcommand{\anno}{\mathsf{a}}
\newcommand{\annof}{\mathsf{f}}
\newcommand{\annou}{\mathsf{u}}

%% file: sec-intro.tex

Modal fixed-point and program logics extend modal languages with operators for expressing inductive and coinductive definitions. 
Due to these constructs, such logics often require reasoning principles that go beyond finite, well-founded proofs~\cite{AfsLei17,ForSan13,Min78}. One approach to obtain cut-free and analytic proof systems for such logics is to employ \emph{non-wellfounded proofs}, which take the form of non-wellfounded trees and rely on global soundness conditions to ensure correctness. Such proofs can be viewed as a formalization of reasoning by infinite descent. The relationship between induction, infinite descent, and non-wellfounded proofs has been studied in particular by Brotherston~\cite{Brotherston2006}. He discusses the sense in which induction and infinite descent can be regarded as equivalent proof principles, while emphasizing that the precise correspondence depends on the underlying formalization. When non-wellfounded proofs have the shape of a \emph{regular tree}---containing only finitely many distinct subtrees---they can be folded into finite proofs in which certain leaves are linked back to internal nodes, thereby creating cycles. This gives rise to \emph{cyclic proofs}, which  provide a finite representation of non-wellfounded proofs and are amenable to automated proof-search~\cite{RooZen22} and to computing interpolants~\cite{AfsLei22,Sha14}.

In the setting of \emph{Gentzen sequents} (pairs of (multi)sets of logical formulae), the relationship between non-wellfounded and cyclic proofs is well understood. As alluded to above, if a non-wellfounded proof can be transformed into a regular tree, then it is straightforwardly convertible into a cyclic proof; conversely, one can unravel a cyclic proof into a non-wellfounded proof. By contrast, the correspondence between non-wellfounded and cyclic proofs has received far less attention in the context of \emph{multisequents}.

Multisequents are generalizations of Gentzen sequents, obtained by embedding Gentzen sequents into more complex data structures. Examples include hypersequents (multisets of Gentzen sequents)~\cite{Avr96,Pot83}, linear nested sequents (lines of Gentzen sequents)~\cite{Mas92,Lel15}, and labeled sequents (graphs of Gentzen sequents)~\cite{Sim94,Vig00}. For a survey of multisequent systems and their relationships, see~\cite{LyoEtAl25}. Multisequents were introduced over the past few decades to provide cut-free and analytic sequent-style systems for modal and related logics---particularly for logics where a cut-free Gentzen sequent calculus was not known to exist. 
Multisequent systems often enjoy properties that can make them better suited for proof analysis and applications than traditional Gentzen systems. For instance, many multisequent calculi consist entirely of invertible rules, feature symmetric rules that facilitate cut-elimination, and use structures conducive to counter-model extraction. 

Research on non-wellfounded multisequent calculi remains nascent, with only a handful of works on non-wellfounded hypersequent or labeled sequent systems. A set of non-wellfounded and cyclic hypersequent systems for modal logics with the master modality capturing different frame conditions was provided by Rooduijn~\cite{Roo21,Rooduijn2024}, while Das and Girlando provided a cyclic hypersequent system for transitive closure logic~\cite{DasGir23}. Docherty and Rowe provided a cut-free non-wellfounded labeled system for propositional dynamic logic as well as a cyclic labeled system that includes cut~\cite{DocRow19}.  More recently, Das et al.~\cite{DasGieMar24} presented non-wellfounded labeled calculi for classical and intuitionistic Gödel-Löb logic, with Aguilera and Pacheco subsequently providing a cyclic version of the intuitionistic system~\cite{AguPac25}. Furthermore, Afshari et al.~\cite{AfshariEtAl2023} provided non-wellfounded calculi for intuitionistic temporal logics which employ formula nesting, and their work was improved on by Men\'endez Turata~\cite{MenendezTurata2024}, who obtained a cut-free labeled cyclic proof system. Thus, for multisequents more complex than hypersequents, the works by Aguilera and Pacheco and by Men\'endez Turata serve as the only examples of cut-free cyclic multisequent systems.


Our aim in this paper is to advance this line of research by providing a deeper understanding of cyclic proofs in the context of multisequents. In particular, we focus on non-wellfoundedness and cyclicity within the \emph{linear nested sequent (LNS)} formalism. Linear nested sequents, which consist of ordered lines of Gentzen sequents, generalize hypersequents. This formalism was introduced by Lellmann~\cite{Lel15} and builds on the 2-sequent framework originally developed by Masini~\cite{Mas92,Mas93}. As a case study, we provide cut-free non-wellfounded and cyclic linear nested sequent calculi for linear temporal logic ($\ltl$), which is our first contribution in this paper. We then study and address two central problems, which we call \emph{cycle recognition} and \emph{unraveling}. The cycle recognition problem asks how to identify cycles in non-wellfounded proofs in order to construct a corresponding cyclic proof, whereas the unraveling problem concerns the reverse process: transforming a cyclic proof into a non-wellfounded one.

Detecting cycles in non-wellfounded multisequent proofs is particularly challenging because multisequents employ richer structures than traditional Gentzen sequents. Repetitions do not recur verbatim along infinite branches; instead, structures tend to grow. Consequently, cycles must be identified not as identical repeated sequents, but as recurring growth patterns. We resolve this problem by showing completeness with respect to non‑wellfounded proofs satisfying the \emph{saturation recurrence property}. This guarantees that repetitive growth patterns can be detected along infinite branches, thereby enabling the extraction of cyclic proofs. This constitutes our second contribution.

Unraveling cyclic proofs in this setting is likewise non-trivial since the standard unraveling procedure is inapplicable due to the more complex structures used in proofs. As a third contribution, we show how this problem can be addressed by introducing a specialized unraveling procedure that shifts rule applications forward along linear nested sequents. This makes it possible to reconstruct a non-wellfounded proof in which linear nested sequents exhibit repetitive growth along infinite branches. Since the non-wellfounded system for $\ltl$ is proven sound and complete, the above two transformations yield soundness and completeness of the cyclic system as a corollary.

\smallskip
\noindent
\textbf{LTL and Related Systems.} Linear temporal logic ($\ltl$) is an important program logic introduced by Pnueli~\cite{Pnu77} for reasoning about dynamically evolving systems. 
The logic is equipped with the temporal operators \emph{next} $\X$ and \emph{until} $\until$, which enable the specification of infinite behaviors over execution traces. This gives 
$\ltl$ considerable expressive power while still admitting effective decision procedures. 
Due to these favorable properties, $\ltl$ is widely used in model checking and formal verification, where properties such as safety (e.g., ‘no bad state will be reached’) and liveness (e.g., ‘a good state will eventually be reached’) can be specified and automatically checked.

A variety of sequent calculi have been proposed for $\ltl$. Paech introduced a Gentzen-style system for the logic~\cite{Pae89}; however, this system contains an induction rule that is an instance of cut, and is therefore not cut-free. To address this issue, Brünnler and Lange~\cite{BruLan08} later developed a cut-free cyclic Gentzen calculus that employs annotations---called \emph{histories}---to detect cycles in proofs. A variant of Brünnler and Lange's calculus was discussed by Kokkinis and Studer~\cite{KokStu16} for the unary fragment of $\ltl$, in which weakening is syntactically admissible. More recently, Alonderis et al.~\cite{AloEtAl20} provided a non-annotated variant of the Brünnler-Lange calculus for full $\ltl$. Boretti~\cite{Bor08} introduced an infinitary labeled system for $\ltl$, which relies on an $\omega$-rule with infinitely many premises, rather than on non-wellfounded proofs, to establish validity. On the satisfiability side, Reynolds~\cite{Rey16} provided a one-pass tree-shaped tableau for $\ltl$ satisfiability that employs the $\mathsf{LOOP}$ and $\mathsf{PRUNE}$ rules to detect satisfiable and unsatisfiable cycles in tableau branches, respectively. While the cycle detection underlying Reynolds' tableau is reminiscent of the cycle recognition studied in the present work, the precise relationship between these approaches remains open.

Since our linear nested sequent systems fall within the paradigm of non-wellfounded proof theory, they are most closely related to~\cite{BruLan08,KokStu16,AloEtAl20}. However, they differ in that we employ the richer formalism of linear nested sequents rather than traditional Gentzen sequents, our calculus is \emph{formula-driven} (i.e., it does not contain any structural rules), and all rules are invertible. We remark that our choice of formalism is motivated by two considerations. First, syntactic cut-elimination remains open for all existing non-wellfounded Gentzen calculi for $\ltl$~\cite{BruLan08,KokStu16,AloEtAl20}, and we conjecture that the additional structural expressiveness of linear nested sequents provides a pathway to resolving this problem. Second, our work contributes to the broader goal of developing a general theory of non-wellfounded and cyclic multisequent systems: the cycle recognition and unraveling techniques introduced here serve as a foundation for extending these methods to other expressive formalisms beyond the linear nested setting.

\smallskip
\noindent
\textbf{Outline of Paper.} In Section~\ref{sec:prelims} we introduce the preliminaries for $\ltl$. Section~\ref{sec:calculi} introduces our non-wellfounded and cyclic linear nested sequent systems for $\ltl$ and establishes soundness and completeness of the non-wellfounded calculus. Section~\ref{sec:correspondence} provides our solutions to the cycle recognition and unraveling problems, establishing a syntactic correspondence between non-wellfounded and cyclic linear nested sequent proofs, which entails the soundness and completeness of the cyclic system. Finally, in Section~\ref{sec:conclusion}, we conclude and discuss future work. We note that a full version of this paper, containing all proofs, is available at~\cite{LyoZenFull}.

%% file: sec-logical-prelims.tex
Let $\atm := \{p,q,r,\ldots\}$ be a denumerable set of \emph{(propositional) atoms}. We define the language $\lang$ to be the collection of all formulae $A$ generated via the following grammar in BNF:
$$
A, B ::= p \mid \bot \mid (A \imp B) \mid \X A \mid (A \until B)
$$
with $p \in \atm$. We use $A$, $B$, $C$, $\ldots$ to denote formulae from $\lang$. The formula $\X A$ expresses `$A$ is true in the \emph{next} time-step' while $A \until B$ expresses `$A$ is true \emph{until} $B$ is true (and $B$ eventually becomes true)'. We have opted to use a minimal signature for $\ltl$ to simplify the presentation of our linear nested sequent systems in the subsequent section. We define  $\neg A := A \imp \bot$, $\top := \neg \bot$, $A \lor B := \neg A \imp B$, $A \land B := \neg (A \imp \neg B)$, and $A \leftrightarrow B := (A \imp B) \land (B \imp A)$. Other temporal operators are defined in terms of $\until$ as follows: $\mathtt{F} A := \top \until A$ expresses `$A$ is \emph{eventually} true', $\mathtt{G} A:= \neg \mathtt{F} \neg A$ expresses `$A$ is \emph{henceforth} true', and $A \mathtt{R} B := \neg (\neg A \until \neg B)$ expresses `$A$ \emph{releases} B'. A \emph{complex formula} is a formula of the form $A \imp B$, $\X A$, or $A \until B$. We let $\sufo{A}$ denote the set of all \emph{subformulae} of $A$, defined in the usual way, and call a formula $B$ a \emph{subformula} of $A$ \iffi $B \in \sufo{A}$. For a set of formulae $\Gamma$, we define $\sufo{\Gamma} := \bigcup_{A \in \Gamma} \sufo{A}$.


\begin{definition}\label{def:semantics} A \emph{state sequence} is an infinite sequence of states $\bar\sigma=\tuple{\sigma_0,\sigma_1,\ldots}$, where each state $\sigma_i \subseteq \atm$. The \emph{satisfaction} of a formula $A$ on $\bar\sigma$ at the time point $i \ge0$, written $\bar\sigma, i \sat A$, is defined by:
\begin{itemize}
\item $\bar\sigma, i \sat p$ \iffi  $p \in \sigma_i$;



\item $\bar\sigma, i \not\sat \bot$;

\item $\bar\sigma, i \sat A \imp B$ \iffi $\bar\sigma, i \not\sat A$ or $\bar\sigma, i \sat B$;


\item $\bar\sigma, i \sat \X A$ \iffi $\bar\sigma, i+1 \sat A$;

\item $\bar\sigma, i \sat A \until B$ \iffi there is a $j \geq i$ such that $\bar\sigma, j \sat B$ and for every $i \leq n < j$, $\bar\sigma, n \sat A$;



\item $\bar\sigma \sat A$ \iffi $\bar\sigma, 0 \sat A$.

\end{itemize}
We say that $A$ is \emph{valid}, written $\sat A$, \iffi for each state sequence $\bar\sigma$ we have $\bar\sigma \sat A$; otherwise, a formula $A$ is \emph{invalid}, written $\not\sat A$. We define the logic $\ltl$ to be the set of all valid formulae from $\lang$.
\end{definition}

\begin{example}
  It is straightforward to verify, using the semantic clauses defined above, that for any formulae $A,B \in \lang$, $A \until B \leftrightarrow B \lor (A \land \X (A \until B))$ is valid. The formula $B \lor (A \land \X (A \until B))$ is called the \emph{unfolding} of $A \until B$. 
\end{example}

%% file: sec-calculus.tex
This section introduces linear nested sequents as well as the non-wellfounded and cyclic calculi for $\ltl$. To ensure soundness of the cyclic calculus, we make use of focusing to keep track of good formula traces. 

An \emph{annotated formula} is a tuple $(A, \anno)$ where $A$ is a formula and $\anno \in \{ \annou, \annof \}$ is an \emph{annotation}. The annotation $\annof$ designates that the formula is \emph{in focus} while $\annou$ designates that the formula is \emph{unfocused}. We will write annotated formulae as $A^\anno$ and define $A^{\annou} := A$ for simplicity. That is, we take unfocused formulae to just be regular formulae; this will simplify our presentation. We do not use parentheses when displaying annotated formulae, e.g., $A \until B^\annof$ should be read as $(A \until B)^\annof$.

For a set of annotated formulae $\Gamma$, we define $\Gamma^-:= \{A \mid A^\an \in \Gamma\}$, i.e., $\Gamma^-$ is the set of formulae obtained by removing annotations. A \emph{Gentzen sequent} is an expression of the form $\Gamma \sar \Delta$ such that (1) $\Gamma$ and $\Delta$ are finite sets of annotated formulae, (2) at most one formula in $\Gamma$ is in focus, (3) if a formula in $\Gamma$ is in focus, then it is of the form $A \until B$, and (4) every formula in $\Delta$ is unfocused. We call $\Gamma$ the \emph{antecedent} and $\Delta$ the \emph{consequent} of $\Gamma \sar \Delta$.  If $\Gamma \sar \Delta$ contains a formula in focus, then we call $\Gamma \sar \Delta$ \emph{focused}, and otherwise we call  $\Gamma \sar \Delta$ \emph{unfocused}. 

\begin{figure*}[t]
\noindent\hrule

\begin{center}
\begin{tabular}{c c c}
\AxiomC{$\phantom{\lnsg}$}
\RightLabel{$\id$}
\UnaryInfC{$\lnsg \sep \Gamma, p \sar p, \Delta \sep \lnsh$}
\DisplayProof

&

\AxiomC{$\phantom{\lnsg}$}
\RightLabel{$\botl$}
\UnaryInfC{$\lnsg \sep \Gamma, \bot \sar \Delta \sep \lnsh$}
\DisplayProof
\end{tabular}
\end{center}


\begin{center}
\begin{tabular}{c c}
\AxiomC{$\lnsg \sep \Gamma, A \sar  B, \Delta \sep \lnsh$}
\RightLabel{$\impr$}
\UnaryInfC{$\lnsg \sep \Gamma \sar A \imp B, \Delta \sep \lnsh$}
\DisplayProof

&

\AxiomC{$\lnsg \sep \Gamma, B \sar \Delta \sep \lnsh$}
\AxiomC{$\lnsg \sep \Gamma \sar A, \Delta \sep \lnsh$}
\RightLabel{$\impl$}
\BinaryInfC{$\lnsg \sep \Gamma, A \imp B \sar \Delta \sep \lnsh$}
\DisplayProof
\end{tabular}
\end{center}


\begin{center}
\begin{tabular}{c c c}
\AxiomC{$\lnsg \sep \Gamma \sar \Delta \sep \Sigma, A \sar \Pi \sep \lnsh$}
\RightLabel{$\nextli$}
\UnaryInfC{$\lnsg \sep \Gamma, \X A \sar \Delta \sep \Sigma \sar \Pi \sep \lnsh$}
\DisplayProof

&

\AxiomC{$\lnsg \sep \Gamma \sar \Delta \sep A \sar $}
\RightLabel{$\nextlii$}
\UnaryInfC{$\lnsg \sep \Gamma, \X A \sar \Delta$}
\DisplayProof

&

\AxiomC{$\lnsg \sep \Gamma, A \until B^\annou \sar \Delta \sep \lnsh$}
\RightLabel{$\annou$}
\UnaryInfC{$\lnsg \sep \Gamma, A \until B^\annof \sar \Delta \sep \lnsh$}
\DisplayProof
\end{tabular}
\end{center}


\begin{center}
\begin{tabular}{c c c}

\AxiomC{$\lnsg \sep \Gamma \sar \Delta \sep \Sigma \sar A, \Pi \sep \lnsh$}
\RightLabel{$\nextri$}
\UnaryInfC{$\lnsg \sep \Gamma \sar \X A, \Delta \sep \Sigma \sar \Pi \sep \lnsh$}
\DisplayProof

&

\AxiomC{$\lnsg \sep \Gamma \sar \Delta \sep \sar A$}
\RightLabel{$\nextrii$}
\UnaryInfC{$\lnsg \sep \Gamma \sar \X A, \Delta$}
\DisplayProof

&

\AxiomC{$\lnsg \sep \Gamma, A \until B^\annof \sar \Delta \sep \lnsh$}
\RightLabel{$\annof$}
\UnaryInfC{$\lnsg \sep \Gamma, A \until B^\annou \sar \Delta \sep \lnsh$}
\DisplayProof
\end{tabular}
\end{center}

\begin{center}
\begin{tabular}{c}
\small
\AxiomC{$\lnsg \sep \Gamma, B \sar \Delta \sep \Sigma \sar \Pi \sep \lnsh$}
\AxiomC{$\lnsg \sep \Gamma, A \sar \Delta \sep \Sigma, A \until B^\an  \sar \Pi \sep \lnsh$}
\RightLabel{$\untilli$}
\BinaryInfC{$\lnsg \sep \Gamma, A \until B^\an \sar \Delta \sep \Sigma \sar \Pi \sep \lnsh$}
\DisplayProof
\end{tabular}
\end{center}

\begin{center}
\begin{tabular}{c}
\AxiomC{$\lnsg \sep \Gamma \sar A, B, \Delta \sep \Sigma \sar \Pi \sep \lnsh$}
\AxiomC{$\lnsg \sep \Gamma \sar B, \Delta \sep \Sigma \sar A \until B, \Pi \sep \lnsh$}
\RightLabel{$\untilri$}
\BinaryInfC{$\lnsg \sep \Gamma \sar A \until B, \Delta \sep \Sigma \sar \Pi \sep \lnsh$}
\DisplayProof
\end{tabular}
\end{center}


\begin{center}
\begin{tabular}{c c}
\AxiomC{$\lnsg \sep \Gamma, B \sar \Delta$}
\AxiomC{$\lnsg \sep \Gamma, A \sar \Delta \sep A \until B^\an \sar $}
\RightLabel{$\untillii$}
\BinaryInfC{$\lnsg \sep \Gamma, A \until B^\an \sar \Delta$}
\DisplayProof

&

\AxiomC{$\lnsg \sep \Gamma \sar A, B, \Delta$}
\AxiomC{$\lnsg \sep \Gamma \sar B, \Delta \sep \sar A \until B$}
\RightLabel{$\untilrii$}
\BinaryInfC{$\lnsg \sep \Gamma \sar A \until B, \Delta$}
\DisplayProof
\end{tabular}
\end{center}

\noindent\hrule
\caption{Inference Rules $\calc$.\label{fig:calc}}
\end{figure*}

A \emph{linear nested sequent (LNS)} is an expression of the form $\lnsg := \Gamma_{0} \sar \Delta_{0} \sep \cdots \sep \Gamma_{n} \sar \Delta_{n}$ such that (1) $\Gamma_{i} \sar \Delta_{i}$ is a Gentzen sequent for $0 \leq i \leq n$ called the \emph{i-component} of $\lnsg$ (or simply \emph{component} of $\lnsg$ if the index $i$ is not of importance), and (2) 
at most one component of $\lnsg$ is focused. LNSs are particularly natural for LTL, where every state in a state sequence has a unique successor. Intuitively, an LNS may thus be regarded as a finite `snapshot' of a state sequence, with each Gentzen sequent corresponding to one state and its right neighbor sequent corresponding to its successor state. We use $\lnsg$, $\lnsh$, $\lnsk$ and so on to denote LNSs. For an LNS $\lnsg = \Gamma_{0} \sar \Delta_{0} \sep \cdots \sep \Gamma_{n} \sar \Delta_{n}$ we refer to the $(n{-}1)$-component as the \emph{penultimate component}, to the $n$-component as the \emph{end component}, and to any $k$-component such that $k < n$ as an \emph{interior component}, that is, an interior component is any component that is \emph{not} the end component. 


We define the multiset of components of a linear nested sequent $\lnsg = \Gamma_{0} \sar \Delta_{0} \sep \cdots \sep \Gamma_{n} \sar \Delta_{n}$ to be $\comp{\lnsg} := \set{\Gamma_{i} \sar \Delta_{i} \mid 0 \leq i \leq n}$. The \emph{length} of an LNS $\lnsg$ is the number of components it contains, i.e., $\size{\lnsg} := |\comp{\lnsg}|$. For an LNS $\lnsg$, we let $\lnsg(i)$ be the $i$-component of $\lnsg$, if $0 \leq i < \size{\lnsg}$, and we let $\last{\lnsg} := \lnsg(\size{\lnsg}-1)$ denote the end component of $\lnsg$. For a Gentzen sequent $\Gamma \sar \Delta$, let $\sufo{\Gamma \sar \Delta} := \sufo{\Gamma^{-} \cup \Delta^{-}}$. For an LNS $\lnsg = \Gamma_0 \sar \Delta_0 \sep \cdots \sep \Gamma_n \sar \Delta_n$, we define $\sufo{\lnsg}$ as follows: $\sufo{\lnsg} := \bigcup_{0 \leq i \leq n} \sufo{\Gamma_i^{-} \cup \Delta_i^{-}}.$ We can interpret LNSs by means of their \emph{formula interpretation}, defined below:
$$
\fint(\Gamma \sar \Delta) := \bigwedge \Gamma^- \imp \bigvee \Delta^-
\quad
\fint(\Gamma \sar \Delta \sep \lnsg) := \bigwedge \Gamma^- \imp (\bigvee \Delta^- \lor \X \fint(\lnsg))
$$

\begin{definition}\label{def: satisfaction LNS} Let $\lnsg$ be an LNS, $0 \leq i < \size{\lnsg}$, and $\Bar{\sigma}$ be a state sequence. We define $\model, i \sat \lnsg(i)$ \iffi $\model, i \sat \bigwedge \Gamma^- \to \bigvee \Delta^-$ with $\lnsg(i) = \Gamma \sar \Delta$. We define $\model \sat \lnsg$ \iffi $\model \sat \fint(\lnsg)$, i.e., there exists $0 \leq i < \size{\lnsg}$ such that $\model, i \sat \lnsg(i)$.\footnote{Note that the annotations do not convey any semantic meaning.} An LNS $\lnsg$ is \emph{valid} \iffi for each state sequence $\Bar{\sigma}$, $\model \sat \lnsg$; otherwise, $\lnsg$ is \emph{invalid}.
\end{definition}

We define $\calc$ to be the set of all rules depicted in \fig~\ref{fig:calc}. The $\id$ and $\botl$ rules are \emph{initial rules} and we call the conclusion of an initial rule an \emph{initial sequent}. The \emph{focusing rules} $\annou$ and $\annof$, read bottom-up, unfocus and focus an until-formula, respectively.  The remaining rules are \emph{logical rules} that introduce complex formulae into either the antecedent or consequent of a component of an LNS. As discussed in more detail below, the $\untilli$ and $\untillii$ rules utilize our focusing mechanism to ensure the soundness of non-wellfounded and cyclic proofs. Moreover, we remark that the $\untilli$, $\untillii$, $\untilri$, and $\untilrii$ rules rewrite an until-formula $A \until B$ into its equivalent unfolding $B \vee (A \wedge \X(A \until B))$. 
The rules $\untilli$ and $\untilri$ operate on until-formulae in an interior component: in the right premise, $A\until B$ is pushed into the next component. The rules $\untillii$ and $\untilrii$ work similarly but operate on until-formulae in the end component, thereby introducing a new component in the right premise.

\begin{remark}\label{remark: annotations}
    The presence of annotations limits the possible instances of rules, due to the restriction that components of an LNS contain at most one focused formula and LNSs contain at most one focused component. Therefore, every component of the conclusion of an instance of $\annof$ must be unfocused, as otherwise the premise would either contain two focused components or a component with two formulae in focus. Similarly, one can verify that every component in the left premise of an until-rule is unfocused. 
\end{remark}

The \emph{principal formula} of a logical rule in $\calc$ is the formula explicitly displayed in the conclusion, and we define the \emph{auxiliary formulae} to be those explicitly displayed in the premises. For example, $A \until B$ is principal in $\untilri$ and $A$, $B$, and $A \until B$ are auxiliary. All other formulae are called \emph{side formulae}. The \emph{principal component} (\emph{auxiliary components}) is (are) the component(s) where the principal (auxiliary, resp.) formula(e) occur. Since we are working with sequents based on sets of formulae (where contraction is implicit), rule instances are either preserving or succinct. A rule instance is \emph{preserving} if the principal formula also occurs as a side formula in the premise, and \emph{succinct} otherwise. For example, the following shows a preserving instance of $\impr$ on the left and a succinct instance on the right:
\begin{center}
\begin{tabular}{c c}
\AxiomC{$\lnsg \sep  A \sar  B, A \imp B \sep \lnsh$}
\RightLabel{$\impr$}
\UnaryInfC{$ \lnsg \sep \sar A \imp B \sep \lnsh$}
\DisplayProof

&

\AxiomC{$\lnsg \sep  A \sar  B \sep \lnsh$}
\RightLabel{$\impr$}
\UnaryInfC{$\lnsg \sep  \sar A \imp B \sep \lnsh$}
\DisplayProof
\end{tabular}
\end{center}


The following lemma, which states that all rules of $\calc$ are locally sound, is readily verified by means of the formula interpretation.

\begin{lemma}\label{lem: local soundness}
    Let $\model$ be a state sequence and consider an instance of a rule  $\mathsf{r}$ of $\calc$ with conclusion $\lnsg$ and premises $\lnsh_1, \ldots, \lnsh_n$. If $\model \not \sat \lnsg$, then there exists $1 \leq i \leq n$ with $\model \not \sat \lnsh_i$.
\end{lemma}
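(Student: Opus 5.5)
The argument is a case analysis on the rule $\mathsf{r}$, using the satisfaction clause of Definition~\ref{def: satisfaction LNS}. Assume $\model \not\sat \lnsg$. By definition, $\model, j \not\sat \lnsg(j)$ for every index $j$ of $\lnsg$: at each such $j$ all antecedent formulae of the $j$-component hold and all consequent formulae fail. Annotations carry no semantic meaning, so we may ignore them throughout. For each rule we exhibit a premise $\lnsh_i$ with $\model, j \not\sat \lnsh_i(j)$ for every index $j$ of $\lnsh_i$.

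\case{Initial rules.} If $\mathsf{r}$ is $\id$ or $\botl$, the principal component of the conclusion is true at its index, since $p$ cannot be both true and false and $\bot$ is never true. This contradicts $\model \not\sat \lnsg$, so the claim holds vacuously.

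\case{Focusing rules.} For $\annou$ and $\annof$ the premise and conclusion differ only in annotations, so the premise is falsified exactly when the conclusion is.

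\case{Implication rules.} These are local to one component $k$. For $\impr$, falsity of $A \imp B$ at $k$ gives $A$ true and $B$ false, so the premise is falsified. For $\impl$, truth of $A\imp B$ at $k$ gives either $B$ true, and then the left premise is falsified, or $A$ false, and then the right premise is falsified. All other components are unchanged.

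\case{Next rules.} If the principal component is $k$, then $\X A$ true at $k$ means $A$ is true at $k+1$, and $\X A$ false at $k$ means $A$ is false at $k+1$. For $\nextli$ and $\nextri$, adding $A$ to the antecedent or consequent of the $(k{+}1)$-component therefore keeps that component falsified. For $\nextlii$ and $\nextrii$, the new end component $A \sar$ or $\sar A$ sits at index $k+1$ and is falsified there; this works because state sequences are infinite, so index $k+1$ always exists.

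\case{Until rules.} These use the unfolding from the earlier example: $A\until B$ holds at $k$ iff $B$ holds at $k$, or $A$ holds at $k$ and $A \until B$ holds at $k+1$.
\begin{itemize}
\item For $\untilli$ and $\untillii$, truth of $A \until B$ at $k$ means either $B$ is true at $k$, falsifying the left premise, or $A$ is true at $k$ and $A\until B$ is true at $k+1$, falsifying the right premise.
\item For $\untilri$ and $\untilrii$, falsity of $A\until B$ at $k$ gives $B$ false at $k$, and also either $A$ false at $k$, falsifying the left premise, or $A \until B$ false at $k+1$, falsifying the right premise.
\end{itemize}

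\case{Preserving instances.} Because sequents are sets, a preserving instance merely keeps the principal formula in the premise. Its truth value at its index is already what falsification of that component requires, so this causes no difficulty.

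The only step needing care is the $\untilri/\untilrii$ case, where the disjunction in the negated unfolding must be split correctly. This is a routine semantic check, so I expect no real obstacle.
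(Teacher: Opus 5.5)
Your proposal is correct and takes essentially the same approach as the paper. The paper only says that local soundness is readily verified rule by rule via the formula interpretation, and your case analysis carries out exactly that check. You use the equivalent componentwise falsification from the definition of satisfaction for linear nested sequents, and handle the until rules through the unfolding of $A \until B$.
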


\smallskip
\noindent
\textbf{Non-Wellfounded Proofs.} We define a \emph{derivation} $\der$ of an LNS $\lnsg$ in $\calc$ to be a (potentially infinite) tree of LNSs such that (1) $\lnsg$ is the root and (2) every parent node is the conclusion of a rule of $\calc$ with its children the corresponding premises. An \emph{index set} is a set $\idxset \subseteq \mathbb{N}$ such that (1) $0 \in \idxset$ and (2) if $n+1 \in \idxset$, then $n \in \idxset$, i.e., an index set is an initial segment of the natural numbers, or the whole set of naturals. We use index sets to define certain kinds of (potentially infinite) sequences below. 

A \emph{path} in a derivation $\der$ is a (potentially infinite) sequence $\lnspath = \set{\lnsg_{i}}_{i \in \idxset}$ of LNSs (where $\idxset$ is an index set) such that for each $i \in \idxset$ if $i+1 \in \idxset$, then the LNS $\lnsg_{i}$ is the parent of $\lnsg_{i+1}$. A path $\lnspath = \set{\lnsg_{i}}_{i \in \idxset}$ is \emph{finite} if $\idxset \not = \mathbb{N}$ and \emph{infinite} otherwise. Moreover, $\lnspath = \set{\lnsg_{i}}_{i \in \idxset}$ is \emph{maximal} if $\lnspath$ is finite and $\lnsg_{\lvert \idxset \rvert -1}$ is a leaf or if $\lnspath$ is infinite. A \emph{branch} in a derivation $\der$ is a maximal path $\branch = \set{\lnsg_{i}}_{i \in \idxset}$ of LNSs such that $\lnsg_{0}$ is the root. The \emph{height} of a derivation is defined in the usual way as the maximal length of a branch in the derivation, which may be infinite. A path $\lnspath := \set{\lnsg_{i}}_{i \in \idxset}$ has a \emph{continuous focus trace} \iffi for every $i \in \idxset$, $\lnsg_{i}$ contains a formula in focus. The following lemma follows directly from this definition and Remark~\ref{remark: annotations}:

\begin{lemma}
If a path $\lnspath$ has a continuous focus trace, then neither $\annou$ nor $\annof$ are applied along the path.
\end{lemma}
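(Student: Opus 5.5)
The plan is to argue by contradiction, one rule at a time. Suppose $\lnspath = \set{\lnsg_{i}}_{i \in \idxset}$ has a continuous focus trace, and suppose some step $\lnsg_i \to \lnsg_{i+1}$ of the path is an application of $\annou$ or $\annof$. Here $\lnsg_i$ is the conclusion and $\lnsg_{i+1}$ the (unique) premise. I will show that one of $\lnsg_i$, $\lnsg_{i+1}$ contains no formula in focus. Since both indices lie in $\idxset$, this contradicts the definition of a continuous focus trace.

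\case{Case $\annou$.} The conclusion has the form $\lnsg \sep \Gamma, A \until B^\annof \sar \Delta \sep \lnsh$, and the premise replaces $A \until B^\annof$ by the unfocused $A \until B$. The conclusion is an LNS, so it has at most one focused component, namely the displayed one. That component is a Gentzen sequent, so $A \until B^\annof$ is the only focused formula in it. Consequents never contain focused formulae, so this is the whole count. Once the single focused formula is unfocused, the premise $\lnsg_{i+1}$ contains no formula in focus. This contradicts continuity at index $i+1$.

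\case{Case $\annof$.} Here I use Remark~\ref{remark: annotations} directly. The premise of $\annof$ contains the newly focused $A \until B^\annof$. By the well-formedness constraints (at most one focused formula per component, at most one focused component), the conclusion $\lnsg_i$ must be entirely unfocused. Otherwise the premise would have two focused formulae in one component, or two focused components. So $\lnsg_i$ contains no formula in focus, which contradicts continuity at index $i$.

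The only point needing care is the $\annou$ case. There one must check that "at most one focused formula in the LNS" really follows from the definitions: at most one focused component, plus at most one focused antecedent formula per Gentzen sequent, plus no focused formulae in consequents. This is the same bookkeeping that Remark~\ref{remark: annotations} already relies on. Beyond that the argument is immediate, since each step of a path is a single rule instance from parent to child.
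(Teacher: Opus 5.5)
Your argument is the one the paper intends. The paper only says the lemma follows from the definition of a continuous focus trace and Remark~\ref{remark: annotations}, and your count of focused formulae (at most one focused component, at most one focused antecedent formula in it, none in consequents) is exactly the bookkeeping behind that remark.

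There is, however, an unstated side condition that both your argument and the paper's remark rely on. Sequents are \emph{sets} of annotated formulae, and the paper explicitly allows preserving rule instances. So nothing in the rule schemata stops the context $\Gamma$ of a focusing rule from already containing $A \until B^\annof$. Take $\Gamma = \{p \until q^\annof\}$. Then $\annou$ has a well-formed instance with conclusion $p \until q^\annof \sar$ and premise $p \until q^\annof, p \until q \sar$. Likewise, $\annof$ has a well-formed instance with conclusion $p \until q^\annof, p \until q \sar$ and premise $p \until q^\annof \sar$. In both instances, conclusion and premise are focused.

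This affects two of your steps. Your sentence ``once the single focused formula is unfocused, the premise contains no formula in focus'' silently assumes $A \until B^\annof \notin \Gamma$, and so does your appeal to the remark in the $\annof$ case. Without that assumption, the statement as literally formalized fails on these two-node paths.

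The fix is to make the convention explicit: the displayed focused formula of $\annou$ and $\annof$ does not also occur in $\Gamma$. With that in place, your counting shows the unique focused formula of the focused LNS is the displayed one and is absent from the other LNS, and your proof is complete.
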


An infinite path $\lnspath$ in a derivation $\der$ is called \emph{progressing} if (1) $\lnspath$ has a continuous focus trace and (2) $\lnspath$ passes infinitely often through the right premise of an instance of a rule $\mathsf{r} \in \{\untilli, \untillii\}$ in which the formula in focus is principal. An infinite branch $\branch$ in $\der$ is called \emph{progressing} if it has a suffix which is a progressing path.

\begin{definition}
A \emph{non-wellfounded proof} (or, \emph{$\calci$-proof}) of an LNS $\lnsg$ is a derivation $\der$ of $\lnsg$ such that (1) every leaf of $\der$ is an initial sequent and (2) every infinite branch is progressing.
\end{definition}
We let $\calci$ be the non-wellfounded LNS calculus obtained by letting the set of provable sequents be determined by non-wellfounded proofs and write $\calci \proves \lnsg$ if $\lnsg$ has a non-wellfounded proof. \smallskip

\noindent \textbf{Cyclic Proofs.} Let $\lnsg$ be an LNS with $\size{\lnsg} = n+1$ and $\lnsg(i) = \Gamma_{i} \sar \Delta_{i}$. For $0 \leq i \leq n$, we define $\lnsg$ to be \emph{$i$-saturated} \iffi it is not an initial sequent and $\set{A \mid A \in (\Gamma_{i} \cup \Delta_{i})^-} \cap \set{\X A, A \imp B, A \until B \mid A, B \in \lang} = \emptyset$. We call $\lnsg$ \emph{saturated} \iffi for each $0 \leq i < n$, $\lnsg$  is $i$-saturated. Intuitively, saturated sequents serve as checkpoints: since their interior components are free of complex formulae (meaning, no rules are bottom-up applicable to these components), two saturated sequents with matching end components represent identical `states' in a proof, making them natural candidates for the endpoints of a cycle. This serves as the motivation for our notion of cyclic proof given below.

\begin{definition}\label{dfn:cyclic-proof}
A tuple $\tuple{\der,L,c}$ is a \emph{cyclic proof} (or, \emph{$\calccyc$-proof}) of an LNS $\lnsg$ \iffi it satisfies: 
\begin{itemize}

\item[$(1)$] $\der$ is a finite derivation of $\lnsg$;

\item[$(2)$] $L$ is a set containing only leaves of $\der$, which we call \emph{buds};

\item[$(3)$] $c$ is a function mapping each $\lnsg \in L$ to an LNS $c(\lnsg)$ strictly below $\lnsg$, called the \emph{companion} of $\lnsg$, such that $\size{c(\lnsg)} < \size{\lnsg}$, the path from $c(\lnsg)$ to $\lnsg$ has a continuous focus trace and goes through the right premise of a rule $\ru \in \set{\untilli,\untillii}$ with the principal formula in focus, both $c(\lnsg)$ and $\lnsg$ are saturated, and $\last{c(\lnsg)} = \last{\lnsg}$;\footnote{By ``\emph{strictly below $\lnsg$},'' we mean that $c(\lnsg)$ occurs on the path from the root of $\der$ to the leaf $\lnsg$ and is distinct from $\lnsg$.}


\item[$(4)$] every other leaf in $\der$ that is not in $L$ is an initial sequent.

\end{itemize}
\end{definition}
We let $\calccyc$ be the cyclic LNS calculus obtained by letting the set of provable sequents be determined by cyclic proofs and write $\calccyc \proves \lnsg$ if $\lnsg$ has a cyclic proof.

\begin{example} An example of a cyclic proof $(\der, L, c)$ is displayed in \fig~\ref{fig:example-cyc-proof} where $L = \set{\lnsg}$ and $c(\lnsg) = \lnsh$ such that $\lnsg := p \sar q \sep p \until q^\annof \sar p \until q$ and $\lnsh := p \until q^\annof \sar p \until q$. It is straightforward to verify that $c$ defines a correct cycle since $\size{\lnsh} < \size{\lnsg}$, the branch from $\lnsh$ to $\lnsg$ has a continuous focus trace and goes through the right premise of $\untillii$ with the principal formula in focus, both $\lnsg$ and $\lnsh$ are saturated, and $\last{\lnsh} = \last{\lnsg}$.
\end{example}

\begin{figure}
\begin{center}
\begin{tabular}{c c}
$\prf_{1} \ :=$
&
\AxiomC{}
\RightLabel{$\id$}
\UnaryInfC{$q \sar p, q$}
\AxiomC{}
\RightLabel{$\id$}
\UnaryInfC{$q \sar q \sep \sar p \until q$}
\RightLabel{$\untilrii$}
\BinaryInfC{$q \sar p \until q$}
\DisplayProof
\end{tabular}
\end{center}

\begin{center}
\begin{tikzpicture}
\node[] (a) [] {\AxiomC{$\prf_{1}$}

\AxiomC{}
\RightLabel{$\id$}
\UnaryInfC{$p \sar p, q \sep p \until q^\annof \sar$}

\AxiomC{$p \sar q \sep p \until q^\annof \sar p \until q$}
\RightLabel{$\untilri$}
\BinaryInfC{$p \sar p \until q \sep p \until q^\annof \sar$}
\RightLabel{$\untillii$}
\BinaryInfC{$p \until q^\annof \sar p \until q$}
\DisplayProof};


\draw[->,rounded corners=.25cm,dashed]
  (2.8,.75) -- (2.8,1) -- (6.2,1) -- node[right,midway]{$c$} 
  (6.2,-.65) -- (-.2,-.65);

\end{tikzpicture}
\end{center}

\caption{Example of a cyclic proof.\label{fig:example-cyc-proof}}
\end{figure}

\subsection{Soundness and Completeness of $\calci$}


We finish this section by arguing that $\calci$ is sound and complete. The detailed proofs can be found in the full version of this paper~\cite{LyoZenFull}. In the subsequent section, we will define bi-directional proof transformations between non-wellfounded and cyclic proofs, yielding soundness and completeness of $\calccyc$ as a corollary. 

To establish soundness of $\calci$ we proceed via a proof by infinite descent, as is often done for non-wellfounded proof systems, see e.g.~\cite{AfshariEtAl2023}. We assume toward a contradiction that some invalid LNS $\lnsg$ has a non-wellfounded proof $\pi$. Since $\lnsg$ is invalid, there exists a state sequence $\model$ such that $\model \not \sat \lnsg$. We then show how to obtain a branch $\branch =\{\lnsg_i\}_{i\in \mathbb{N}}$ of $\pi$ such that $\model \not\sat \lnsg_i$ for each $i \in \mathbb{N}$, by using the fact that all rules of $\calci$ are locally sound (see Lemma~\ref{lem: local soundness}). Note that $\branch$ must be an infinite branch of $\pi$, since initial sequents are always valid. Simultaneously, we assign a well-founded measure in the form of natural numbers to formulae in focus occurring in LNSs along $\branch$ and show that whenever a formula $A \until B^\annof$ is principal in $\untilli$ or $\untillii$ and $\branch$ passes through the right premise, the measure assigned to $A \until B^\annof$ strictly decreases, and weakly decreases otherwise. Finally, since $\prf$ is a proof, $\branch$ must contain a progressing suffix, meaning that there is a continuous focus trace and the suffix passes infinitely often through the right premise of instances of $\untilli$ and $\untillii$ with the formula in focus principal. Thus, we obtain an infinite decreasing sequence of natural numbers which strictly decreases infinitely often, contradicting the well-foundedness of 
$\mathbb{N}$. This gives the desired contradiction, implying the soundness of $\calci$. 

\begin{theorem}[Soundness of $\calci$]\label{thm:soundness}
    If $\calci \proves \lnsg$, then $\lnsg$ is valid.
\end{theorem}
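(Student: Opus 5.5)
We argue by contradiction along the lines sketched before the statement. Suppose $\prf$ is a non-wellfounded proof of $\lnsg$ and $\lnsg$ is invalid, witnessed by a state sequence $\model$ with $\model \not\sat \lnsg$. Beginning at the root and repeatedly applying Lemma~\ref{lem: local soundness}, we choose at each node a premise that $\model$ also falsifies. This produces a branch $\branch = \set{\lnsg_i}_{i \in \idxset}$ with $\model \not\sat \lnsg_i$ for every $i$. Initial sequents are valid: $\id$ is satisfied at the index of its principal component, and $\botl$ is trivially satisfied. Hence $\branch$ cannot end in a leaf, so it is infinite. Because $\prf$ is a proof, $\branch$ has a progressing suffix starting at some $k$. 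Along this suffix every $\lnsg_i$ has a focused formula, and by the lemma on continuous focus traces no focusing rule is applied there.

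\textbf{The measure.} For $i \ge k$, let the focused formula of $\lnsg_i$ be $A \until B^\annof$, sitting in the antecedent of component $j_i$. Since $\model \not\sat \lnsg_i$, every component is falsified at its time point, so in particular $\model, j_i \sat A \until B$. We set $\mu_i$ to be the least $m \ge 0$ such that $\model, j_i + m \sat B$ and $\model, n \sat A$ for all $j_i \le n < j_i+m$. This $m$ exists by the semantics of until.

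\textbf{Monotonicity.} We check how $\mu$ changes from $\lnsg_i$ to $\lnsg_{i+1}$. If the step is not an until-left rule with the focused formula principal, then the focused formula is a side formula. It keeps its component index, since new components are only appended at the end, so $\mu_{i+1} = \mu_i$. A preserving instance of $\untilli$ or $\untillii$ on the focused formula also keeps it at index $j_i$, so $\mu$ is unchanged. Now suppose the focused formula is principal and $\branch$ passes through the right premise. Focus stays continuous, so the new focused formula is the copy in component $j_i+1$. (If the instance is preserving, the copy at $j_i$ stays unfocused, because a component has at most one focused formula.) The right premise contains $A$ at $j_i$, and $\model$ falsifies it, so $\model, j_i \sat A$. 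The left premise would contain $B$ at $j_i$, but only the right premise is on the branch, so we argue directly. If $\mu_i = 0$, then $\model, j_i \sat B$, and minimality of the witness rules this out only through the choice of premise. We therefore refine the construction: when the focused formula is principal and $\model, j_i \sat B$, we pick the left premise. That premise is falsified, because $B$ holds in the antecedent and the remaining components are unchanged. This refinement is the delicate point. It guarantees that whenever $\branch$ enters a right premise with the focused formula principal we have $\mu_i \ge 1$ and $\mu_{i+1} = \mu_i - 1$.

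\textbf{Contradiction.} The sequence $(\mu_i)_{i \ge k}$ is weakly decreasing and decreases strictly at each of the infinitely many right-premise passages that progression guarantees. No such sequence of natural numbers exists. Hence $\lnsg$ is valid.

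The main obstacle is making the branch selection compatible with the measure: we must choose premises that are falsified and that also follow the witness for the focused until-formula. A secondary point is the side condition that the focused formula in a left premise forces that premise to be unfocused (Remark~\ref{remark: annotations}). Taking a left premise therefore breaks the focus trace. This does not matter, because the progressing suffix forbids such steps, and each focused formula's witness is chosen fresh whenever the focusing rule $\annof$ is applied.
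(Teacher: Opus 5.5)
Your proof is correct and follows the paper's own infinite-descent argument. You pick a $\model$-falsified branch via local soundness, steered by the least until-witness of the focused formula, so that the measure is constant at non-principal steps and drops at each right-premise passage of $\untilli$ or $\untillii$ with the focused formula principal. One sentence is off: a preserving instance with the focused formula principal cannot exist, since its right premise would contain two focused formulae, so your remark that such an instance ``keeps it at index $j_i$'' concerns a vacuous case and does not affect the argument.
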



Completeness of $\calci$ is established by a proof-search argument. Given an LNS $\lnsg$, we build a structure $\prf$, called a \emph{proof-search tree}, by exhaustively applying rules bottom-up to formulae in the interior components of LNSs until every open leaf is either initial or saturated. Initial leaves are closed, as are saturated leaves that are not the conclusion of any rule instance. For any other saturated leaf, we distinguish the following cases. (1) If the end component does not contain a formula of the form $\X A$ or $A \until B^\anno$, then we exhaustively apply implication rules to decompose the formulae in the end component until we obtain an initial sequent or the end component is saturated; in either case the leaf is closed. (2) Otherwise, we apply a rule that adds a new component. In case (2) we also need to apply the focusing rules if there are formulae of the form $A \until B$ present. Since there are in general multiple formulae present that may be focused or unfocused, we introduce so-called \emph{branching focus rules} which have one premise for each possible application of $\annou$ or $\annof$. If we apply a rule that adds a new component in case (2), then the premises are in general not saturated, and so the process can be repeated indefinitely. This results in an infinite tree of LNSs called a proof-search tree for $\lnsg$. Since we employ branching focus rules, this proof-search tree is not a derivation. However, we can obtain derivations from proof-search trees by choosing a single premise for each branching focus rule instance and deleting the other branches. Derivations obtained in that way satisfy the \emph{saturation recurrence property (SRP)}:

\begin{definition} A derivation $\pi$ satisfies the \emph{saturation recurrence property (SRP)} if and only if, for any infinite branch $\branch := \{\lnsg_i\}_{i \in \mathbb{N}}$ in $\pi$, infinitely many $\lnsg_i$ are saturated.
\end{definition}
 
We then show that given a proof-search tree for an LNS $\lnsg$, we can either extract a derivation from it which is a non-wellfounded proof of $\lnsg$, or the proof-search tree contains a non-progressing branch which can be used to construct a counter-model for $\lnsg$. Since valid LNSs do not have counter-models, we obtain that every valid LNS has a non-wellfounded proof.

\iffalse
We call $\nextlii$, $\nextrii$, $\untillii$ and $\untilrii$ \emph{expansion} rules and all other logical rules \emph{length-preserving}. An LNS $\lnsg$ is called \emph{fully saturated} \iffi for $0 \leq i < \size{\lnsg}$, $\lnsg$ is $i$-saturated, i.e., $\lnsg$ is not the conclusion of any instance of a rule in $\calci$. We formalize proof-search by means of \emph{proof-search trees} which represent a systematic search for a proof of a given LNS. 

\begin{definition} A \emph{proof-search tree} for an LNS $\lnsg$ is a derivation $\pi$ of $\lnsg$ such that the following hold:
\begin{enumerate}

\item[$(1)$] Every rule instance in $\pi$ is succinct;

\item[$(2)$] The conclusion of every instance of an expansion rule is saturated;

\item[$(3)$] Every leaf of $\pi$ is an initial sequent or a fully saturated sequent.

\end{enumerate}
\end{definition}

Intuitively, given an LNS $\lnsg$, we obtain a proof-search tree by applying length-preserving rules bottom-up until every leaf is either an initial sequent or saturated. Branches ending in initial sequents or fully saturated leaves are closed, while branches ending in saturated leaves that are \emph{not fully saturated} are extended by bottom-up applying an expansion rule. Since the premise of an expansion rule may no longer be saturated, this process can be repeated. We thus obtain the following lemma, whose proof is standard and omitted.

\begin{lemma}
    Every LNS $\lnsg$ has a proof-search tree.
\end{lemma}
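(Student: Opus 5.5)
The plan is to build the proof-search tree for $\lnsg$ as the union of an increasing chain of finite partial derivations $\pi_0 \subseteq \pi_1 \subseteq \cdots$. Here $\pi_0$ is the single node $\lnsg$, and each $\pi_{k+1}$ is obtained from $\pi_k$ by expanding every open leaf that is neither initial nor fully saturated with one bottom-up rule instance. The union $\pi := \bigcup_k \pi_k$ is then a tree of LNSs. It is a derivation because every node that ever gets expanded is the conclusion of a rule instance whose children are exactly its premises. Conditions $(1)$ and $(2)$ are local, so it suffices to maintain them at each stage. For $(3)$, a leaf of $\pi$ is a node never expanded, and by construction we only refrain from expanding initial or fully saturated nodes. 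So all the work lies in specifying the expansion strategy and checking that it is always possible and respects $(1)$ and $(2)$.

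\textbf{Strategy.} Given an open leaf $\lnsh$ that is not initial and not fully saturated, I would split into three phases.
\begin{itemize}
\item[(a)] If $\lnsh$ is not saturated, pick the \emph{leftmost} interior component containing a complex formula. Apply a succinct instance of the corresponding length-preserving rule: $\impl$, $\impr$, $\nextli$, $\nextri$, $\untilli$ or $\untilri$. I take succinct instances because they are always available: we simply drop the principal formula.
\item[(b)] If $\lnsh$ is saturated but the end component contains an implication, apply $\impl$ or $\impr$ succinctly to it.
\item[(c)] Otherwise the end component contains some $\X A$ or $A \until B^\an$, and we apply the matching expansion rule $\nextlii$, $\nextrii$, $\untillii$ or $\untilrii$ succinctly. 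Its conclusion is saturated by the case hypothesis, so $(2)$ holds.
\end{itemize}
Where the focusing rules are needed to make an instance well-formed, we insert a single $\annou$ or $\annof$ step immediately before the logical rule. Remark~\ref{remark: annotations} constrains which annotation patterns are legal, so I would check that the succinct premises of $\untilli$ and $\untillii$ still have at most one focused component. This should follow because the focused formula is only ever moved, never duplicated.

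\textbf{Main obstacle.} The step I expect to be hardest is showing that phases (a) and (b) cannot run forever between two consecutive expansion steps. Otherwise some infinite branch would never reach a saturated node, and that node would never be "processed" properly. The difficulty is that $\untilli$ and $\nextli/\nextri$ push formulae into the next component, which may itself be interior. So the naive count of complex formulae does not decrease.

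To handle this I would use a lexicographic measure indexed by component position from left to right. For each component $i$, let $w_i$ be the multiset of sizes of complex formulae in component $i$, and compare tuples $(w_0,\dots,w_n)$ lexicographically with the multiset ordering on each entry. Each succinct length-preserving step works on the leftmost non-saturated component $j$ and does the following: it removes the principal formula from component $j$, adds only strictly smaller formulae to component $j$, and touches only components $j$ and $j+1$. The one exception is $\untilli$, which re-adds $A \until B$ in component $j+1$, but that is a later, lower-priority position. Hence every such step strictly decreases the tuple, and the length is unchanged. Phase (b) likewise decreases the end entry. By well-foundedness every branch reaches, after finitely many steps, either an initial sequent or a saturated LNS whose end component contains no implications. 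At such a point phase (c) applies or the node is fully saturated.

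Finally, I would verify that "not fully saturated but no rule from (a)--(c) applies" cannot occur. The focusing rules are the only remaining candidates, and we never need to apply them on their own. So a node that is saturated and has no $\X$, $\until$ or implication in its end component is treated as closed. If the paper's notion of fully saturated counts $\annou/\annof$ as applicable, I would adjust by allowing one unfocusing step first to reach a canonical annotation.
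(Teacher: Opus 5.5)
Your construction is correct and is essentially the argument the paper has in mind (the paper only sketches it and omits the proof as standard): saturate interior components with succinct length-preserving rules, decompose end-component implications, apply a succinct expansion rule only to saturated LNSs, and take the union of the stages, so that every unexpanded leaf is initial or fully saturated. Two things are not needed, though. The termination measure you flag as the main obstacle is superfluous for this lemma, since a non-terminating run of length-preserving steps only yields an infinite branch, which violates none of conditions (1)--(3); it matters for the subsequent SRP claim, and there $w_i$ must count antecedent and consequent occurrences, and differently annotated copies, separately. Likewise, no focusing steps are ever needed, because succinct instances of all logical rules are always well-formed.
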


Since in a proof-search tree every rule is applied succinctly and expansion rules are only applied to saturated LNSs, a simple inductive argument shows that every infinite branch tree contains infinitely many saturated sequents and so that every proof-search tree satisfies the SRP. The proof of the following lemma is deferred to Appendix~\ref{app:completeness}.
 
\begin{lemma}\label{lem: SRP}
    Let $\lnsg$ be an LNS and $\pi$ be a proof-search tree for $\lnsg$. Then, $\pi$ satisfies the SRP.
\end{lemma}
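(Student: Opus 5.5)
We argue by contradiction. Suppose $\branch = \{\lnsg_i\}_{i \in \mathbb{N}}$ is an infinite branch of a proof-search tree $\pi$ for $\lnsg$ that has a suffix containing no saturated LNS. So there is some $k$ such that no $\lnsg_i$ with $i \geq k$ is saturated.

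The key observation is that, by clause (2) of the definition of proof-search tree, expansion rules ($\nextlii$, $\nextrii$, $\untillii$, $\untilrii$) are only applied to saturated conclusions. Hence from $k$ onwards only length-preserving rules are applied along $\branch$: the length-preserving logical rules and (if present in the tree) the focusing rules. In particular $\size{\lnsg_i} = \size{\lnsg_k}$ for all $i \geq k$. Since the focusing rules are presumably only applied in combination with expansion steps on saturated sequents, we may also assume that from $k$ on only length-preserving logical rules occur. If this is not literally part of the definition, we handle it by noting that focusing changes at most one annotation and cannot be applied twice consecutively without returning to an earlier LNS.

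We then define a well-founded measure on LNSs of fixed length $n+1$. Take the multiset of formula sizes of the complex formulae in the interior components $0,\dots,n-1$, with each formula weighted by a factor decreasing in its component index. Concretely, compare lexicographically the multisets for component $0$, then component $1$, and so on up to $n-1$, using the Dershowitz--Manna multiset ordering; this ordering is well-founded.

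Next we check that every succinct length-preserving rule strictly decreases this measure, using clause (1), which requires succinctness:
\begin{itemize}
\item $\impr$ and $\impl$ replace $A \imp B$ in component $j$ by the strictly smaller $A$, $B$ in the same component.
\item $\nextli$ and $\nextri$ remove $\X A$ from component $j$ and add $A$ to component $j+1$, so the measure at index $j$ drops while later indices may grow. This is fine because the order is lexicographic with earlier components more significant.
\item $\untilli$ and $\untilri$ remove $A \until B$ from component $j$, adding smaller $A$, $B$ there and $A\until B$ to $j+1$.
\end{itemize}
Rules acting on the end component are not logical rules on interior components. The only length-preserving rules applicable there are the implication rules, which decrease the size of the end component. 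We therefore append the end component's multiset as a last lexicographic coordinate.

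An infinite strictly decreasing sequence in a well-founded order is impossible, which gives the contradiction. Hence every infinite branch contains infinitely many saturated LNSs, and $\pi$ satisfies the SRP.

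The main obstacle is ensuring that the focusing and branching focus rules cannot loop indefinitely without saturation. I would resolve this by appealing to the construction, where they are applied only at saturated leaves immediately before an expansion step. A secondary point is verifying succinctness, since preserving instances would break the decrease.
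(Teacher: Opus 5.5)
Your argument matches the paper's intended ``simple inductive argument'': on a suffix with no saturated LNS, clause (2) excludes expansion rules, so the length is fixed. Every remaining step is then a succinct length-preserving logical rule, and each such step strictly decreases your component-wise lexicographic Dershowitz--Manna measure. One caveat: your fallback remark about the focusing rules does not work, since derivations may repeat LNSs, so an endless $\annof/\annou$ alternation on a non-saturated LNS is not excluded by the literal clauses of the definition and would violate the SRP; you must rely, as you ultimately do, on focusing being applied only at saturated sequents, as in the proof-search construction.
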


Observe that if $\branch=\set{\lnsg_i}_{i \in \mathbb{N}}$ is an infinite branch of a proof-search tree $\pi$, then there are infinitely many $i \in \mathbb{N}$ such that $\size{\lnsg_i} < \size{\lnsg_{i+1}}$, since $\pi$ satisfies the SRP and rules are applied succinctly. In particular if $\lnsg_i$ is saturated and the conclusion of an instance of $\untilrii$ with $\lnsg_{i+1}$ the left premise, then $\lnsg_{i+1}$ is saturated and its size does not grow. However, since the instance is succinct and sequents are ordered pairs of \emph{finite} sets of annotated formulae, this situation can only occur finitely many times before $\branch$ passes through an expansion rule that does increase the size of the premise. We now show that every LNS has an $\calci$-proof which satisfies the SRP or is falsifiable, from which completeness readily follows.

\begin{lemma}\label{lem: proof or falsifiable}
    Let $\lnsg$ be an LNS. Then, $\lnsg$ has an $\calci$-proof which satisfies the SRP or there exists a state sequence $\Bar{\sigma}$ such that $\Bar{\sigma} \not \sat \lnsg$.
\end{lemma}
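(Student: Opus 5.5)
We take a proof-search tree $\pi$ for $\lnsg$ (which exists, and satisfies the SRP by Lemma~\ref{lem: SRP}). We then extend it with branching focus rules: whenever an expansion rule is applied to a saturated LNS, we first insert a branching focus rule. Its premises are all admissible refocusings: unfocus everything, or put focus on exactly one until-formula $A \until B$ occurring in the antecedent of the end component. The resulting tree $\pi^\ast$ still has every infinite branch passing through infinitely many saturated LNSs, since the focus steps do not affect saturation of interior components.

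We then play a game on $\pi^\ast$. \emph{Case 1:} there is a way of choosing one premise at every branching focus node such that the resulting subtree $\pi'$ (a genuine derivation) has all leaves initial and all infinite branches progressing. Then $\pi'$ is an $\calci$-proof of $\lnsg$ satisfying the SRP, because its infinite branches are infinite branches of $\pi^\ast$. We must also check that fully saturated leaves that are not initial do not occur in $\pi'$; if one is forced, we are in Case 2. \emph{Case 2:} otherwise, we select a ``refuting'' branch $\branch$. At ordinary rule nodes the branch picks some premise, and it may choose among focus options adversarially, so that every choice function yields a bad branch. We need a branch that, for \emph{every} focus choice, fails: either it ends in a non-initial fully saturated leaf, or it is infinite and not progressing. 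We obtain it by a determinacy/König-style argument. The winning condition (existence of a progressing suffix) is $\omega$-regular, since finitely many subformulae occur by the subformula property. Hence, by determinacy of the associated parity game, failure of the prover yields a strategy for the refuter, and we fix one branch consistent with it. Concretely, the refuter at each branching focus rule follows along all focus options; the essential fact we extract is that for each until-formula $A \until B$ placed in focus at some stage, the refuter eventually passes through a left premise ($B$ branch) or the focus is dropped. In other words, every eventuality on the branch is eventually fulfilled.

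From $\branch$ we build the counter-model. The components along $\branch$ stabilize: once a component is interior and saturated, it never changes, since succinct rules only act on non-saturated components and expansion only affects the end component. Thus the limit of the branch defines an infinite sequence of Gentzen sequents $\Gamma_0 \sar \Delta_0, \Gamma_1 \sar \Delta_1, \ldots$, or a finite one if the branch ends in a fully saturated leaf, in which case we pad with empty states. Set $\sigma_i := \{p \mid p \in \Gamma_i^-\}$. By induction on formulae we show that $A \in \Gamma_i^-$ implies $\model, i \sat A$, and $A \in \Delta_i^-$ implies $\model, i \not\sat A$. The atomic case uses non-initiality, and implication and $\X$ use the fact that the rules were applied on $\branch$. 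The $\until$ on the right is handled inductively through the unfolding, since the right until never needs eventuality. The hard case is $A \until B$ in an antecedent: we must show that $B$ is eventually chosen. This is exactly where the refuting branch's behaviour at focus nodes is used. If $A \until B$ were pushed forward forever with $A$ chosen, then the focus choice that puts $A \until B$ in focus would have yielded a progressing trace, contradicting the choice of $\branch$. Finally, since all formulae of $\lnsg$ sit in the limit components, $\model \not\sat \lnsg$.

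The main obstacle is this last step: correctly formulating the selection of $\branch$ so that it defeats every focus choice simultaneously, given that focus can only be held on one formula at a time. We would first try a fairness construction rather than full determinacy. On $\branch$ we cycle through the pending antecedent eventualities in round-robin, focusing on the oldest unfulfilled one. Since the proof-search fails, some such focused trace must eventually be interrupted by a left premise, which fulfills that eventuality; after that we move to the next one.
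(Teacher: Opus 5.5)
Your skeleton matches the paper's: run proof search, extract a derivation (which has the SRP), and otherwise read a counter-model off a bad branch by induction on formulae. The crux is $A \until B$ in an antecedent, which the paper settles directly from the branch being non-progressing. The gap is in how you justify that crux.

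\textbf{The until-left case.} Your main argument fixes ``one branch consistent with'' a refuter strategy. It then claims that if $A \until B$ were pushed forward forever, ``the focus choice that puts $A \until B$ in focus would have yielded a progressing trace''. That counterfactual is not available. A different focus choice is a different play, and the refuter's later choices at binary rules may depend on it and discharge $A \until B$ through a left premise. Meanwhile, on the branch you actually fixed, the prover may have unfocused everything, and then non-progress says nothing about eventualities. Your own qualifier ``or the focus is dropped'' shows that ``every eventuality is fulfilled'' does not follow. What works is the fairness idea in your last paragraph, but it has to be the argument, and it makes determinacy superfluous:
\begin{itemize}
\item Before looking for a bad branch, fix the choice function at branching focus nodes: keep the current focus, and when there is none, focus the oldest pending antecedent until-trace (traces ordered by starting component and a fixed order on $\sufo{\lnsg}$).
\item If the resulting single derivation is a proof, you are done. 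Otherwise take any bad branch $\branch$ of it.
\item Suppose some antecedent eventuality is never discharged on $\branch$, and let $T$ be the oldest such trace. The finitely many older traces are all discharged after finitely many steps. So either focus is eventually held forever on some undischarged trace, or it is released and then placed on $T$ for good.
\item Either way, $\branch$ has a continuous focus trace from some point on and passes infinitely often through the right premise of $\untilli$ or $\untillii$ with the focused formula principal. So $\branch$ is progressing, a contradiction.
\end{itemize}

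\textbf{The counter-model.} By definition of saturation, the stabilized interior components you take as limits contain no complex formulae at all. With $\Gamma_i,\Delta_i$ defined as these limits, your truth lemma is vacuous for complex formulae, and the final claim that all formulae of $\lnsg$ sit in the limit components is false. Instead, as the paper does, let $\Gamma_i$ (resp.\ $\Delta_i$) be the union of the antecedents (resp.\ consequents) of all $i$-components occurring along $\branch$. This is clash-free because atoms are never removed: $p$ on both sides, or $\bot$ on the left, would produce an initial sequent on $\branch$. With this definition your case analysis goes through without any stabilization claim:
\begin{itemize}
\item For $A \until B \in \Delta_i$, the chain of right premises either gives $B \in \Delta_{i+n}$ for all $n$, or ends with $A,B \in \Delta_{i+k}$.
\item For $A \until B \in \Gamma_i$, fulfilment at $i+k$ gives $A \in \Gamma_{i+n}$ for $n<k$ and $B \in \Gamma_{i+k}$.
\end{itemize}
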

\begin{proof}
    Let $\lnsg$ be an LNS and let $\pi$ be a proof-search tree for $\lnsg$. Recall that $\pi$ is a derivation. If $\pi$ is a proof, then $\lnsg$ has an $\calci$-proof which satisfies the SRP by Lemma~\ref{lem: SRP}. Otherwise $\pi$ contains a `bad' branch $\branch$: either $\branch$ is finite and ends in a fully saturated sequent or $\branch$ is infinite and not progressing. We only consider the latter case, as the former is argued similarly. 

    So suppose $\branch := \set{\lnsg_{i}}_{i \in \mathbb{N}}$ is an infinite branch of $\der$ which is \emph{not} progressing. We use the branch $\branch$ to construct a state sequence $\sts$ such that $\sts \not\sat \lnsg$. Let us define:
$$
\Gamma_{i} := \!\!\!\!\!\!\!\!\!\!\!\!\!\!\!\! \bigcup_{j \in \mathbb{N}, \, \lnsg_{j}(i) \, = \, (\Gamma \, \sar \, \Delta)} \!\!\!\!\!\!\!\!\!\!\!\!\!\!\!\! \Gamma
\qquad
\Delta_{i} := \!\!\!\!\!\!\!\!\!\!\!\!\!\!\!\! \bigcup_{j \in \mathbb{N}, \, \lnsg_{j}(i) \, = \, (\Gamma \, \sar \, \Delta)} \!\!\!\!\!\!\!\!\!\!\!\!\!\!\!\! \Delta
$$
We define $\sigma_{i} = \set{p \mid p \in \Gamma_{i}}$ and $\bar\sigma=\tuple{\sigma_0,\sigma_1,\ldots}$. We now prove the following by a mutual induction on the length of $A$: (1) if $A^\an \in \Gamma_{i}$ for an annotation $\an$, then $\sts, i \sat A$ and (2) if $A^\an \in \Delta_{i}$ for an annotation $\an$, then $\sts, i \not\sat A$. We consider the case for $A = B \until C$. The full proof is shown in Appendix~\ref{app:completeness}.

\case{Case for $A = B \until C^\an \in \Gamma_{i}$.}By definition, there exists a $j$ such that $B \until C^\an \in \Sigma$ with $\lnsg_{j}(i) = \Sigma \sar \Pi$.  Since $\pi$ satisfies the SRP, there exists an instance of $\untilli$ or $\untillii$ with $B \until C^\an$ principal and $B,C$ and $B \until C^\anb$ residual, where $\anb = \an$ in case $\an \not = \epsilon$. Moreover, since $\branch$ is not progressing, there exists a natural number $k \geq 1$ such that $\untilli$ and $\untillii$ are applied $k$ times above $\lnsg_j$ in the branch $\branch$ such that (1) $B \until C^\anb$ is principal in each instance, (2) for $k-1$ applications, the branch $\branch$ goes through the right premise of $\untilli$ or $\untillii$, and (3) for the $k^\text{th}$ application, the branch $\branch$ goes through the left premise of $\untilli$ or $\untillii$. Otherwise, by the SRP, the path starting at $\lnsg_j$ would have a stable annotation, namely $\anb$, and the path would pass infinitely often through the right premise of $\untilli$ or $\untillii$ with $B \until C^\anb$ principal, which would imply that the path and thus the branch is progressing. Therefore, for $0 \leq n < k-1$, we have $B \in \Gamma_{i+n}$ and $C \in \Gamma_{i+k-1}$. By IH, $\sts, i+n \sat B$ and $\sts, i+k-1 \sat C$, which implies that $\sts, i \sat B \until C$.

\case{Case for $A = B \until C \in \Delta_{i}$.}By definition, there exists a $j$ such that $B \until C \in \Pi$ with $\lnsg_{j}(i) = \Sigma \sar \Pi$. There are two cases to consider. First, it could be the case that $\untilri$ and $\untilrii$ are applied infinitely often above $\lnsg_{j}$ along $\branch$ with $B \until C$ principal and with $\branch$ always going through the right premise of $\untilri$ or $\untilrii$. In this case, we have that $C \in \Delta_{k}$ for all $k \geq i$. By IH, we have that $\sts, k \not\sat C$ for all $k \geq i$, meaning, $\sts, i \not\sat B \until C$. Second, it could be the case that $\untilri$ and $\untilrii$ are applied $k \geq 1$ times\footnote{We note that $k > 0$ since $\pi$ satisfies the SRP by the same argument as before.} above $\lnsg_{j}$ in the branch $\branch$ such that $B \until C$ is principal, the branch $\branch$ goes through the right premise of $\untilri$ or $\untilrii$ for $k-1$ applications, and the branch $\branch$ goes through the left premise of $\untilri$ or $\untilrii$ for the $k^\text{th}$ application. Therefore, for $0 \leq n < k-1$, we have $C \in \Delta_{i+n}$ and $B,C \in \Delta_{i+k-1}$. By IH, $\sts, i+n \not\sat C$, $\sts, i+k-1 \not\sat B$, and $\sts, i+k-1 \not\sat C$, which implies that $\sts, i \not\sat B \until C$.

Finally, let $\lnsg(i) = \Sigma_{i} \sar \Pi_{i}$ for $0 \leq i < \size{\lnsg}$. As $\Sigma_{i} \subseteq \Gamma_{i}$ and $\Pi_{i} \subseteq \Delta_{i}$ for $0 \leq i < \size{\lnsg}$, it follows that $\sts, i \not\sat \lnsg(i)$ for all $0 \leq i < \size{\lnsg}$. Hence, by Definition~\ref{def: satisfaction LNS}, it follows that $\Bar{\sigma} \not \sat \lnsg$ as claimed.
\end{proof}
\fi

\begin{theorem}[Completeness of $\calci$]\label{thm:completeness-non-wf}
If $\lnsg$ is valid, then $\lnsg$ has an $\calci$-proof satisfying the SRP.
\end{theorem}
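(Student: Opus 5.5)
We argue by contraposition through a proof-search construction. Given $\lnsg$, we build a proof-search tree $\prf$ as sketched above: bottom-up, we apply logical rules succinctly to complex formulae in interior components until every open leaf is initial or saturated. For a saturated, non-initial leaf whose end component contains some $\X A$ or $A \until B^\anno$, we first apply a \emph{branching focus rule}. This rule has one premise for each admissible choice of which until-formula in the antecedent of the end component is focused, if any. We then apply, one after another, the expansion rules $\nextlii$, $\nextrii$, $\untillii$, $\untilrii$ to all such formulae of the end component, creating the next component. Leaves saturated in every component, with no applicable rule, are left open.

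First we check that every derivation $\der$ extracted from $\prf$ (choosing one premise at each branching focus rule) satisfies the SRP. Consider an infinite branch. Between two expansion steps only finitely many succinct length-preserving rule applications can occur, since each removes a complex formula from an interior component and adds only strict subformulae, plus until-formulae pushed into the next component. The interior components are fixed finite sets, so a multiset-of-subformula-sizes measure decreases. Hence saturation is reached infinitely often.

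Second, suppose no extracted derivation is a proof. Using a König/Gale–Stewart style argument over the focus choices (finite branching), we pick the focus choices adversarially. The aim is a single branch $\branch=\{\lnsg_i\}$ that either ends in an open fully saturated leaf or is infinite and non-progressing. Concretely, at each branching focus rule we choose the focus according to a fair strategy. The obligation is an until-formula $A\until B$ in an antecedent that is never fulfilled. Whenever some such formula is unfocused, we focus the oldest pending one, in round-robin order, so that a failure to progress along this branch genuinely witnesses that every antecedent until-formula is eventually discharged via its left premise. From $\branch$ we define $\Gamma_i,\Delta_i$ as the unions of the antecedents and consequents of the $i$-components along $\branch$, and set $\sigma_i=\{p\mid p\in\Gamma_i\}$. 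We prove by induction on formulae that $A\in\Gamma_i$ implies $\model,i\sat A$ and $A\in\Delta_i$ implies $\model,i\not\sat A$. The propositional and $\X$ cases follow from saturation recurrence, since every complex formula is eventually principal along $\branch$. For $A\until B$ on the right, either the branch always takes the right premise, so $B$ is refuted forever, or it eventually takes the left premise. For $A\until B$ on the left, non-progression together with the fair focusing yields a finite unfolding that ends in $B$. Since $\lnsg$'s components are contained in $\Gamma_i\sar\Delta_i$, we get $\model\not\sat\lnsg$, contradicting validity.

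The main obstacle is the left-until case combined with the focus choices. We must ensure that a branch which never progresses for the chosen focus actually fulfils \emph{every} antecedent eventuality, not only the focused one. This is exactly why the branching focus rules, and the choice among their premises, are needed. I expect to handle it as follows. If for every choice sequence some branch is bad, we fix a branch that is bad against a fair focusing schedule. A focused eventuality that is never fulfilled would force infinitely many right-premise passes with continuous focus, hence progression. So each focused eventuality is fulfilled, and fairness guarantees that each eventuality is eventually focused, or is already fulfilled before then.
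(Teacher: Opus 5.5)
Your proposal follows the paper's route: a proof-search tree with branching focus rules, the SRP for every derivation extracted from it, and a counter-model built from the unions $\Gamma_i,\Delta_i$ along a bad branch. Your fair focusing strategy is a reasonable way to make precise the branch selection that the paper only sketches, and you do not need Gale--Stewart, since the single derivation extracted by that strategy is either a proof or has a bad branch. When writing it out, there are three things to fix. The strategy must never move the focus off an undischarged eventuality: ``oldest pending first'' does this, but genuine round-robin switching passes through an unfocused LNS and breaks the continuous focus trace your final argument needs. The termination measure for a saturation phase must weight formulae by their distance to the end component, because the right premise of $\untilli$ does not decrease a plain multiset of sizes. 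You must also decompose implications in an end component that has no $\X$- or until-formulae, as in the paper's case (1).
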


%% file: sec-correspondence.tex
We now establish bi-directional proof transformations between non-wellfounded and cyclic proofs. We first address cycle recognition in non-wellfounded proofs. In order to turn non-wellfounded proofs into cyclic proofs, we will make use of the SRP, which guarantees that infinite branches contain suitable `repetitions' where they can be pruned and replaced by cycles.

Transforming cyclic proofs into non-wellfounded ones is non-trivial in the LNS setting. The key idea is captured by the Shifting Lemma (Lemma~\ref{lem:shifting}). Given a cyclic proof $(\der,L,c)$ whose conclusion $c(\lnsg)$ is the companion of a bud $\lnsg$, we construct a new cyclic proof of $\lnsg$ by shifting inferences forward in $(\der,L,c)$, yielding a cyclic proof $(\der',L',c')$. Iterating this construction allows one to unravel cycles ad infinitum, thereby producing a non-wellfounded derivation from a cyclic one. By carefully handling annotations in the construction, we can then show that the derivation is a non-wellfounded proof which satisfies the SRP. Together, these bi-directional transformations establish the soundness and completeness of $\calccyc$.

\subsection{From Non-Wellfounded Proofs to Cyclic Proofs}

This section shows how non-wellfounded proofs which satisfy the SRP can be transformed into cyclic proofs. We call a rule $\mathsf{r} \in \set{\nextlii, \nextrii, \untillii, \untilrii}$ an \emph{expansion rule} (cf.~\cite{LyoOst26}). We first show that every infinite branch in a non-wellfounded proof passes infinitely often through an expansion rule. We then show that every infinite branch in a non-wellfounded proof which satisfies the SRP contains a suitable `repetition' and finally use these two results to establish the translation.

\begin{lemma}\label{lem: nwf branches grow}
If $\pi$ is an $\calci$-proof and $\branch$ is an infinite branch of $\pi$, then $\branch$ passes infinitely often through $\nextlii$, $\nextrii$, or through the right premise of $\untillii$ or $\untilrii$.
\end{lemma}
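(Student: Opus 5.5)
Since $\branch$ is an infinite branch of an $\calci$-proof, it is progressing, so it has a suffix $\lnspath = \set{\lnsg_i}_{i \in \mathbb{N}}$ with a continuous focus trace that passes infinitely often through the right premise of an instance of $\untilli$ or $\untillii$ with the focused formula principal. It suffices to show that this suffix passes infinitely often through an expansion step of the stated kind. The plan is to locate the focused formula and track its component index. By Remark~\ref{remark: annotations} and the lemma on continuous focus traces, no focusing rule is applied along $\lnspath$. Hence every $\lnsg_i$ contains exactly one focused formula, sitting in some component $k_i$.

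First, note how $k_i$ can change. Rules never delete components, so lengths are non-decreasing along $\lnspath$. The focused formula $A \until B^\annof$ moves only when it is principal in $\untilli$ or $\untillii$ and the path takes the right premise. In that case the focus is carried to the next component, so $k_{i+1} = k_i + 1$. In every other step $k_{i+1} = k_i$. This needs one check: when the focused until-formula is principal and the path goes left, the left premise contains no focused formula, contradicting the continuous focus trace. So that case cannot occur. The check also uses succinct versus preserving instances: in a preserving instance the focused formula remains, so the focus stays put or the premise contains two focused formulae, which is not allowed. Consequently $k_i \to \infty$, since the progress condition fires infinitely often.

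Second, since $k_i < \size{\lnsg_i}$ always holds, the lengths $\size{\lnsg_i}$ are unbounded. The length of an LNS grows only through a rule that adds a new component. Inspecting Figure~\ref{fig:calc}, these are exactly $\nextlii$, $\nextrii$, and the right premises of $\untillii$ and $\untilrii$; all other rules and premises preserve length. An unbounded non-decreasing integer sequence must increase infinitely often. Therefore $\lnspath$, and hence $\branch$, passes infinitely often through one of the listed steps.

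The main obstacle is the bookkeeping in the first step. One must confirm, rule by rule, that the focused formula's component index never decreases. One must also confirm that the progress condition really advances it by one, which includes the case where $\untillii$ adds the component holding the new focused copy.
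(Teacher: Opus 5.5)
Your proof is correct and is essentially the paper's argument, run directly instead of by contradiction. The paper notes that a suffix avoiding $\nextlii$, $\nextrii$ and the right premises of $\untillii$, $\untilrii$ has constant length, and says this ``immediately'' rules out a progressing suffix; your tracking of the focus index $k_i$ (never decreasing, increasing by one at each progress point, and bounded by $\size{\lnsg_i}$) is exactly the justification the paper leaves implicit.
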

\begin{proof}
 Suppose otherwise and let $\pi$ be an $\calci$-proof and $\branch = \set{\lnsg_i}_{i \in \mathbb{N}}$ be an infinite branch of $\pi$. Then, there exists a $j \in \mathbb{N}$ such that the suffix $\lnspath=\set{\lnsg_i}_{i \geq j}$ of $\branch$ starting at $\lnsg_j$ never passes through instances of $\nextlii$, $\nextrii$ or the right premises of $\untillii$ or $\untilrii$. Therefore, $\size{\lnsg_i} = \size{\lnsg_k}$ for all $i,k \geq j$, which immediately implies that $\branch$ does not have a progressing suffix and is thus not progressing. This contradicts our assumption that $\pi$ is a proof.
\end{proof}

\begin{lemma}\label{lem: suitable repetitions exist} Let $\prf$ be a non-wellfounded proof which satisfies the SRP and let $\branch$ be an infinite branch. Then, $\branch$ contains two LNSs $\lnsg$ and $\lnsh$ such that $\lnsg$ occurs strictly below $\lnsh$ in $\branch$, $\size{\lnsg} < \size{\lnsh}$, the path from $\lnsg$ to $\lnsh$ contains a continuous focus trace and goes through the right premise of $\mathsf{r} \in \set{\untilli, \untillii}$ with the formula in focus principal, and $\last{\lnsg} = \last{\lnsh}$.
\end{lemma}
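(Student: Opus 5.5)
Since $\prf$ is a proof, $\branch$ has a progressing suffix $\lnspath = \set{\lnsg_i}_{i \geq j}$. The plan is to work entirely inside $\lnspath$, so that any two LNSs chosen there automatically have a continuous focus trace between them. On top of that we need three further things: the length strictly grows, a focused until-rule's right premise is passed, and the end components agree. The last is the pigeonhole part, and it needs a finiteness bound on the possible end components.

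\textbf{Step 1 (finiteness).} All rules of $\calc$ are subformula-preserving. Hence every formula occurring (with annotation) in any component of any LNS of $\prf$ belongs to the finite set $\sufo{\lnsg_0}$, up to the two annotations $\annou,\annof$. Here $\lnsg_0$ is the root.

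The only apparent exception would be $\neg$-free constructions, but the signature is minimal and no rule introduces new formulae. The until-rules only add $A$, $B$, $A\until B$ for a subformula $A \until B$. Consequently the set of possible Gentzen sequents (and hence end components) occurring in $\prf$ is finite, say of size $N$.

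\textbf{Step 2 (growth).} By Lemma~\ref{lem: nwf branches grow}, $\lnspath$ passes infinitely often through $\nextlii$, $\nextrii$, or the right premise of $\untillii$/$\untilrii$. These steps increase the length by one, and no rule of $\calc$ decreases the length. So $\size{\lnsg_i}$ is non-decreasing along $\lnspath$ and unbounded.

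Since $\lnspath$ is progressing, it also passes infinitely often through the right premise of $\untilli$ or $\untillii$ with the focused formula principal. From these two facts I can pick indices $j \leq k_0 < k_1 < k_2 < \cdots$ such that for each $m$:
\begin{itemize}
\item $\size{\lnsg_{k_m}} < \size{\lnsg_{k_{m+1}}}$;
\item the segment from $\lnsg_{k_m}$ to $\lnsg_{k_{m+1}}$ contains such a progressing until-step.
\end{itemize}
To find $k_{m+1}$, go past the next progressing step after $k_m$, and then past the next length increase.

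\textbf{Step 3 (pigeonhole).} The sequence $\last{\lnsg_{k_m}}$, $m \in \mathbb{N}$, ranges over at most $N$ values. So there exist $m < m'$ with $\last{\lnsg_{k_m}} = \last{\lnsg_{k_{m'}}}$. Set $\lnsg := \lnsg_{k_m}$ and $\lnsh := \lnsg_{k_{m'}}$. Then:
\begin{itemize}
\item $\lnsg$ is strictly below $\lnsh$;
\item $\size{\lnsg} < \size{\lnsh}$, by monotonicity;
\item the path between them lies in $\lnspath$, so it has a continuous focus trace;
\item that path contains a right premise of a focused principal $\untilli$/$\untillii$.
\end{itemize}
This gives all the required properties.

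Notably the SRP is not needed for this lemma as stated. It would be used when we additionally want $\lnsg,\lnsh$ saturated, by choosing the $k_m$ among the infinitely many saturated positions, which the SRP supplies. I would mention this strengthening, since the cyclic proof definition demands saturation of both bud and companion, and the same pigeonhole argument goes through verbatim.

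\textbf{Main obstacle.} The main obstacle is Step 1: checking carefully that every rule (especially the until-rules pushing $A\until B$ into the next or new component, and the focusing rules) keeps formulae within $\sufo{\lnsg_0}$, with only finitely many annotations. This is what bounds the number of possible end components.
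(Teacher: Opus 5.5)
Your proof is correct and takes essentially the same route as the paper. Analyticity bounds the number of possible end components, the progressing suffix gives the continuous focus trace and infinitely many focused $\untilli$/$\untillii$ right-premise steps, Lemma~\ref{lem: nwf branches grow} gives infinitely many length increases, and pigeonhole finishes. The paper pigeonholes only over the saturated positions supplied by the SRP, so it actually proves the saturated version needed in Theorem~\ref{thm:non-wf-to-cyclic}; that is exactly the strengthening you describe, and you are right that the literal statement does not need the SRP.
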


\begin{proof} Let $\pi$ be an $\calci$-proof of $\lnsf$ satisfying the SRP. All rules of $\calci$ are \emph{analytic}, i.e., if $\lnsk$ is the conclusion and $\lnsk'$ a premise of a rule, then $\sufo{\lnsk'} \subseteq \sufo{\lnsk}$.  Thus, there are at most $2^{3 \times |\sufo{\lnsf}|}$ many Gentzen sequents that can occur as components of an LNS in $\prf$. Since $\pi$ satisfies the SRP, every infinite branch encounters infinitely many saturated sequents. Since there are at most $2^{3 \times |\sufo{\lnsf}|}$ many Gentzen sequents that can serve as an end component, every infinite branch must contain infinitely many saturated sequents with identical end components. Moreover, since $\pi$ is a proof, every infinite branch $\branch$ is progressing and thus contains a suffix with a continuous focus trace that passes infinitely often through the right premise of $\untilli$ or $\untillii$ with the formula in focus principal. Finally, by Lemma~\ref{lem: nwf branches grow}, for every infinite branch $\branch = \{\lnsg_i\}_{i \in \mathbb{N}}$ there are infinitely many $i \in \mathbb{N}$ such that $\size{\lnsg_i} < \size{\lnsg_{i+1}}$. Thus, every infinite branch must contain a pair of saturated sequents $\lnsg$ and $\lnsh$ such that $\size{\lnsg} < \size{\lnsh}$, the path from $\lnsg$ to $\lnsh$ contains a continuous focus trace and goes through the right premise of $\mathsf{r} \in \set{\untilli, \untillii}$ with the formula in focus principal, and $\last{\lnsg} = \last{\lnsh}$. 
\end{proof}

Lemma~\ref{lem: suitable repetitions exist} implies that every infinite branch of a proof $\pi$ which satisfies the SRP contains a suitable candidate for a cycle, which allows us to extract a cyclic proof. 

\begin{theorem}\label{thm:non-wf-to-cyclic}
Every non-wellfounded proof satisfying the SRP can be transformed into a cyclic proof.
\end{theorem}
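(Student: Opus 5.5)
Given a non-wellfounded proof $\prf$ of $\lnsf$ satisfying the SRP, I will prune every infinite branch at a suitable repetition and turn the pruned leaf into a bud, using Lemma~\ref{lem: suitable repetitions exist}. Recall that the lemma's proof yields pairs $\lnsg$ below $\lnsh$ in which both LNSs are saturated. Such a pair meets every requirement of clause (3) of Definition~\ref{dfn:cyclic-proof}:
\begin{itemize}
\item $\lnsg$ lies strictly below $\lnsh$;
\item $\size{\lnsg} < \size{\lnsh}$;
\item the path from $\lnsg$ to $\lnsh$ has a continuous focus trace and passes through the right premise of $\untilli$ or $\untillii$ with the focused formula principal;
\item both are saturated;
\item $\last{\lnsg} = \last{\lnsh}$.
\end{itemize}
I will therefore call a node $\lnsh$ of $\prf$ \emph{closable} if some node $\lnsg$ strictly below it on the branch from the root forms such a pair with $\lnsh$. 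For each closable node I fix one such witness, written $c(\lnsh)$.

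Next, let $\der$ be the subtree of $\prf$ obtained by cutting every branch at its first closable node. That is, I keep a node iff no node strictly below it is closable. The leaves of $\der$ are then of two kinds. Leaves of $\prf$ are initial sequents by the definition of non-wellfounded proof. The remaining leaves are closable nodes; these form $L$, and $c$ is the witness map fixed above. The companion $c(\lnsh)$ lies on the root-to-$\lnsh$ path and is not itself cut away, since it lies below the first closable node. Conditions (2)–(4) then hold by construction, and every inference in $\der$ is an inference of $\prf$, so $\der$ is a derivation of $\lnsf$.

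The remaining point, which I expect to be the main obstacle, is condition (1): that $\der$ is finite. I would argue by K\"onig's lemma. $\der$ is finitely branching, since every rule has at most two premises. If $\der$ were infinite it would contain an infinite branch $\branch$, and $\branch$ would also be an infinite branch of $\prf$. Applying Lemma~\ref{lem: suitable repetitions exist} to $\branch$ gives a pair $\lnsg,\lnsh$ on $\branch$ with $\lnsh$ closable. Hence $\branch$ would have been cut at or before $\lnsh$, a contradiction.

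The only subtlety is to make sure the lemma really supplies the saturation of both endpoints and a focus trace that covers the whole segment from $\lnsg$ to $\lnsh$, not merely part of it. To secure this, I would choose both nodes inside the progressing suffix of $\branch$. There I would pick saturated nodes with the same end component, which exist because there are finitely many possible end components and the SRP gives infinitely many saturated nodes. Then I would take $\lnsh$ far enough above $\lnsg$ that an expansion step has occurred in between. Lemma~\ref{lem: nwf branches grow} provides such steps infinitely often, and since lengths never decrease along a branch this forces $\size{\lnsg}<\size{\lnsh}$. I would also make sure a focused $\until$-left right-premise step lies between the two nodes, which the progressing suffix guarantees infinitely often.
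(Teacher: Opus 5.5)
Your proposal is correct and follows the paper's proof essentially verbatim. Both prune each branch at the lowermost node that has a saturated companion below it meeting all conditions of Definition~\ref{dfn:cyclic-proof}--(3), and both use K\H{o}nig's lemma together with Lemma~\ref{lem: suitable repetitions exist} to conclude that the pruned tree is finite. Your choice of both endpoints inside the progressing suffix only spells out what the paper's proof of that lemma already establishes (its statement omits saturation), and cutting finite branches at their first closable node too is a harmless variant of the paper's pruning of infinite branches only.
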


\begin{proof} Let $\prf$ be an $\calci$-proof of $\lnsg$ which satisfies the SRP. Let $\prf_0$ be the subtree of $\prf$ obtained by pruning every infinite branch of $\prf$ at the lowermost sequent $\lnsh$ such that $\lnsh$ is saturated, there exists a saturated sequent $\lnsk$ strictly below $\lnsh$ with $\size{\lnsk} < \size{\lnsh}$, the path from $\lnsk$ to $\lnsh$ contains a continuous focus trace and passes through the right premise of $\mathsf{r} \in \set{\untilli, \untillii}$ with the formula in focus principal, and $\last{\lnsk}=\last{\lnsh}$. For each such $\lnsh$ we call $\lnsk$ its companion. Since $\prf$ is finitely branching, K\H{o}nig's lemma and Lemma~\ref{lem: suitable repetitions exist} imply that $\prf_0$ is finite and therefore a finite derivation of $\lnsg$. Let $L$ be the set of leaves of $\prf_0$ that were obtained by pruning a branch and let $c$ be the function which maps each such leaf to its companion. Observe that every other leaf which is not contained in $L$ is also a leaf of $\prf$ and therefore an initial sequent. Hence, the tuple $(\pi_{0}, L, c)$ is a cyclic proof of $\lnsg$.
\end{proof}

We obtain the completeness of the cyclic LNS system as a corollary of Theorems~\ref{thm:completeness-non-wf} and~\ref{thm:non-wf-to-cyclic}.

\begin{corollary}[Completeness of $\calccyc$]\label{cor:cyclic-complete}
 If $\lnsg$ is valid, then $\lnsg$ has an $\calccyc$-proof.
\end{corollary}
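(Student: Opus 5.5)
The statement is an immediate composition of two results already established, so the plan is short. Let $\lnsg$ be a valid LNS. First, apply Theorem~\ref{thm:completeness-non-wf} to obtain an $\calci$-proof $\prf$ of $\lnsg$ that satisfies the SRP. It matters that this version of completeness is invoked, and not merely the existence of some non-wellfounded proof. The transformation into cyclic proofs relies on saturated sequents recurring along every infinite branch, and an arbitrary $\calci$-proof need not have this property.

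Second, apply Theorem~\ref{thm:non-wf-to-cyclic} to $\prf$. This yields a cyclic proof $(\prf_0, L, c)$ whose root is still $\lnsg$. The pruning construction in that proof only cuts infinite branches above the root, at a saturated sequent $\lnsh$. A cut at the root itself is impossible, since a bud needs a companion strictly below it. Hence $\prf_0$ is a finite derivation of $\lnsg$, and therefore $\calccyc \proves \lnsg$.

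The only point that needs care is checking that the output of Theorem~\ref{thm:non-wf-to-cyclic} meets every clause of Definition~\ref{dfn:cyclic-proof}. Each clause should be confirmed against the pruning criterion:
\begin{itemize}
\item buds and their companions are saturated;
\item $\size{c(\lnsh)} < \size{\lnsh}$;
\item the end components agree;
\item the path from $c(\lnsh)$ to $\lnsh$ has a continuous focus trace and passes through the right premise of $\untilli$ or $\untillii$ with the focused formula principal;
\item every leaf not in $L$ is an initial sequent, because it is already a leaf of $\prf$.
\end{itemize}
Finiteness of $\prf_0$ comes from K\H{o}nig's lemma together with Lemma~\ref{lem: suitable repetitions exist}. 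All of this is contained in the proof of Theorem~\ref{thm:non-wf-to-cyclic}, so no new obstacle arises and the corollary follows.
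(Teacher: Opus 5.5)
Your proposal is correct and takes the same approach as the paper, which obtains the corollary directly by composing Theorem~\ref{thm:completeness-non-wf} (completeness of $\calci$ with the SRP) with Theorem~\ref{thm:non-wf-to-cyclic}. You are also right that the SRP-strengthened form of completeness is needed, since the pruning in Theorem~\ref{thm:non-wf-to-cyclic} relies on saturated sequents recurring along every infinite branch.
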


\subsection{From Cyclic Proofs to Non-Wellfounded Proofs}\label{subsec:unravel}

In this section, we prove the soundness of cyclic proofs by means of a proof transformation. We introduce a technique for unraveling cyclic linear nested sequent proofs; in particular, we show how a cyclic proof rooted at the companion of a cycle can be transformed into a cyclic proof of the corresponding bud (Lemma~\ref{lem:shifting}). This is achieved by shifting rule applications forward from the source proof to the target proof while maintaining a suitable invariant, ensuring that the transformation can be iterated indefinitely. By iterating this shifting procedure, we obtain an infinite sequence of growing cyclic proofs, whose limit is a non-wellfounded derivation. We then show that the limit is in fact a non-wellfounded proof which satisfies the SRP. The main difficulty for the last step is to prove that every infinite branch of the limit derivation is progressing. To show this, we will identify every infinite branch in the limit with an infinite \emph{looping path} of the original cyclic proof. Intuitively, a looping path is a path through a cyclic proof that, upon reaching a bud, may continue at its companion. Due to the use of annotations, we can guarantee that every infinite looping path in a cyclic proof is progressing. This property transfers to the corresponding infinite branch in the limit derivation, thus showing that the limit is a non-wellfounded proof. We begin by introducing looping paths and by showing that every infinite looping path has a suffix with a continuous focus trace. 

\begin{definition}\label{dfn:looping-path}
    Let $(\pi, L, c)$ be a cyclic proof. A \emph{looping path} of $(\pi, L, c)$ is a finite or infinite sequence $\lnspath^{\circ} = \set{\lnsg_i}_{i \in I}$ (where $I$ is an index set) of LNSs in $\pi$ such that the following hold:
    \begin{enumerate}
        \item[$(1)$] $\lnsg_0$ is the root of $\pi$;
        \item[$(2)$] If $\lnsg_i$ is a leaf of $\pi$ and $\lnsg_i \not \in L$, then $\lnspath^\circ$ ends at $\lnsg_i$;
        \item[$(3)$] If $\lnsg_i \in L$, then either $\lnspath^\circ$ ends at $\lnsg_i$ or $\lnsg_{i+1}$ is a child of $c(\lnsg_i)$;
        \item[$(4)$] Otherwise, $\lnsg_{i+1}$ is a child of $\lnsg_i$.
    \end{enumerate}
\end{definition}

If $\lnsg_i \in L$ and $\lnsg_{i+1}$ is a child of $c(\lnsg_i)$, then we say that $\lnspath^\circ$ \emph{passes through the leaf $\lnsg_i$}. Note that if $\lnspath^\circ$ is an infinite looping path, then there exists a leaf $\lnsg_i \in L$ through which $\lnspath^\circ$ passes infinitely often. We denote by $\lnspath^\circ_\infty$ the set of leaves in $L$ through which $\lnspath^\circ$ passes infinitely often. Note that $\lnspath^\circ_\infty$ is finite since $\pi$ is finite. In order to distinguish looping paths from ordinary paths through derivations, we also call the latter \emph{upward paths}. We introduce a pre-order on the set of buds of $\pi$ as follows.

\begin{definition}
    Let $(\pi, L, c)$ be a cyclic proof. The \emph{dependency order} $\preceq$ on $L$ is given by: 
$\lnsg_i \preceq \lnsg_j$ if and only if there exists an upward path from $c(\lnsg_j)$ to $c(\lnsg_i)$. 
A \emph{$\preceq$-maximal leaf} of $L' \subseteq L$ is a leaf $\lnsg_i \in L'$ such that for every leaf $\lnsg_j \in L'$, if $\lnsg_i \preceq \lnsg_j$, then $\lnsg_j \preceq \lnsg_i$.
\end{definition}

The following lemma was proven in~\cite[Lemma 6.10]{Zenger2026}.

\begin{lemma}\label{lem: maximal leaf}
   Let $(\pi, L, c)$ be a cyclic proof and $\lnspath^\circ$ an infinite looping path. Then, every $\preceq$-maximal leaf of $\lnspath^\circ_\infty$ has the same companion.
\end{lemma}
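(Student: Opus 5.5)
The plan is to argue via the structure of $\lnspath^\circ_\infty$ and the positions of companions in the finite tree $\pi$. Fix an infinite looping path $\lnspath^\circ$ and write $S := \lnspath^\circ_\infty$. Since $\pi$ is finite and only finitely many leaves are passed through finitely often, there is a stage $N$ after which $\lnspath^\circ$ passes only through buds in $S$, each of them infinitely often. After stage $N$, every segment of $\lnspath^\circ$ between two consecutive passes through buds is an upward path of $\pi$: it starts at a child of some $c(\lnsg_i)$ and ends at some bud $\lnsg_j$.

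The first key step is a comparison principle for companions. If, after stage $N$, the path jumps back at $\lnsg_i$ and the next bud it passes through is $\lnsg_j$, then the upward segment runs from a child of $c(\lnsg_i)$ to $\lnsg_j$. Because $c(\lnsg_j)$ lies strictly below $\lnsg_j$ on the root-to-leaf path, it lies either above $c(\lnsg_i)$, so that $\lnsg_j \preceq \lnsg_i$, or on or below $c(\lnsg_i)$, in which case $c(\lnsg_j)$ is an ancestor of (or equal to) $c(\lnsg_i)$. The reason is that in a tree, the ancestors of $\lnsg_j$ form a chain containing both companions. In the second case we get $\lnsg_i \preceq \lnsg_j$, with an upward path from $c(\lnsg_j)$ to $c(\lnsg_i)$, possibly of length zero. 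Consequently, for any two buds $\lnsg,\lnsg'$ visited consecutively after stage $N$, the companions $c(\lnsg)$ and $c(\lnsg')$ are comparable in the ancestor relation. In particular $\preceq$ restricted to $S$ is total on consecutive visits.

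The second step is to show that all companions of buds in $S$ lie on a single chain and that the lowest of them is attained. I would consider the bud $\lnsg_m \in S$ whose companion is lowest, that is, closest to the root among companions of buds in $S$. I would show that every companion of a bud in $S$ lies above or at $c(\lnsg_m)$. To do this, take the part of $\lnspath^\circ$ between two passes through $\lnsg_m$. Every jump in that part lands at a child of a companion, and the path climbs upward from there. Reaching $\lnsg_m$ again requires passing through $c(\lnsg_m)$ or landing strictly above it. Any later bud whose companion is not above $c(\lnsg_m)$ would, by minimality and the chain argument, have to share the companion $c(\lnsg_m)$. Since every bud in $S$ is visited between some pair of visits to $\lnsg_m$, it follows that every $\lnsg \in S$ satisfies $\lnsg \preceq \lnsg_m$. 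The same reasoning gives the dependency chain to $c(\lnsg_m)$.

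The conclusion then follows. Suppose $\lnsg_k$ is $\preceq$-maximal in $S$. Since $\lnsg_k \preceq \lnsg_m$, maximality gives $\lnsg_m \preceq \lnsg_k$, so there are upward paths in both directions between $c(\lnsg_k)$ and $c(\lnsg_m)$. In a tree, mutual reachability forces $c(\lnsg_k) = c(\lnsg_m)$. Hence every $\preceq$-maximal leaf has the companion $c(\lnsg_m)$. I expect the main obstacle to be the second step: carefully justifying that the looping path cannot "escape" below $c(\lnsg_m)$, that is, showing that after stage $N$ no upward segment enters a region not dominated by $c(\lnsg_m)$. I would handle this with an induction along the looping path. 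The invariant is that after the first visit to $\lnsg_m$ beyond stage $N$, the current node always lies in the subtree rooted at $c(\lnsg_m)$. This is preserved by upward moves, and by jumps to companions, which are ancestors of the current bud and not below $c(\lnsg_m)$ by minimality.
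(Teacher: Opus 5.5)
Your proof is correct. The paper does not prove this lemma itself; it imports it from \cite[Lemma 6.10]{Zenger2026}, so there is no in-paper argument to compare against.

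Your route is the natural one:
\begin{enumerate}
\item Among the finitely many companions of buds in $\lnspath^\circ_\infty$, fix one of minimal depth, $c(\lnsg_m)$. It is then minimal in the ancestor order.
\item Show that after the first pass through $\lnsg_m$ beyond stage $N$, the looping path never leaves the subtree rooted at $c(\lnsg_m)$.
\item Conclude by antisymmetry of the ancestor relation in a tree.
\end{enumerate}

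The load-bearing step is the invariant in your last paragraph. Any bud $\lnsf$ met in the tail lies in $\lnspath^\circ_\infty$ and strictly above $c(\lnsg_m)$. So $c(\lnsf)$ and $c(\lnsg_m)$ are both ancestors of $\lnsf$, hence comparable, and minimality excludes $c(\lnsf)$ lying strictly below $c(\lnsg_m)$. The jump therefore lands at a child of $c(\lnsf)$, which is again inside the subtree.

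The informal sketch in your second step (``reaching $\lnsg_m$ again requires \ldots'') does not by itself rule out intermediate buds outside the subtree. It only becomes an argument once this invariant is in place. Likewise, your first step is needed only for the observation that two ancestors of one node are comparable.

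You in fact prove slightly more than stated. First, $c(\lnsg_m)$ is the least element of $\{c(\lnsf) \mid \lnsf \in \lnspath^\circ_\infty\}$, so $\lnsg_m$ is itself $\preceq$-maximal and maximal leaves exist. Second, every bud of $\lnspath^\circ_\infty$ lies above $c(\lnsg_m)$. The paper relies on exactly these consequences in Lemma~\ref{lem: stable annotation0}, where it takes a $\preceq$-maximal leaf and asserts an upward path from its companion to every bud in $\lnspath^\circ_\infty$.
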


\begin{lemma}\label{lem: stable annotation0}
  Let $(\pi, L, c)$ be a cyclic proof and $\lnspath^\circ$ an infinite looping path. Let $\lnsg_0$ be $\preceq$-maximal in $\lnspath^\circ_\infty$. Then, for every $\lnsg_i \in \lnspath^\circ_\infty$ the upward path from $c(\lnsg_0)$ to $\lnsg_i$ has a continuous focus trace.
\end{lemma}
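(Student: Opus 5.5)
Let $\lnsk := c(\lnsg_0)$; by Lemma~\ref{lem: maximal leaf}, every $\preceq$-maximal leaf of $\lnspath^\circ_\infty$ has companion $\lnsk$. The plan is to show, for an arbitrary $\lnsg_i \in \lnspath^\circ_\infty$, that (a) $\lnsg_i$ lies above $\lnsk$ in $\pi$, so the upward path from $\lnsk$ to $\lnsg_i$ exists, and (b) this upward path has a continuous focus trace. The key observation is that the upward path from $\lnsk$ to $\lnsg_i$ splits into two pieces: the path from $\lnsk$ up to $c(\lnsg_i)$, and the path from $c(\lnsg_i)$ up to $\lnsg_i$. By clause (3) of Definition~\ref{dfn:cyclic-proof}, the second piece already has a continuous focus trace. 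So everything reduces to the first piece.

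\textbf{Step 1 (reachability).} I will argue that $c(\lnsg_i)$ lies on or above $\lnsk$, i.e.\ $\lnsg_i \preceq \lnsg_0$. Since $\lnspath^\circ$ passes through both $\lnsg_0$ and $\lnsg_i$ infinitely often, after some point it only passes through leaves of $\lnspath^\circ_\infty$. Between any two passes through $\lnsg_0$, the looping path starts at a child of $\lnsk$ and moves upward, jumping from buds in $\lnspath^\circ_\infty$ to their companions. Each such jump goes to a companion lying on the root-to-bud path, hence either above $\lnsk$ or at/below $\lnsk$. Take a segment that starts just after a pass through $\lnsg_0$ and later passes through $\lnsg_i$. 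Let $\lnsg_j$ be the first bud in the segment whose companion is not strictly above $\lnsk$, if one exists. Then $\lnsg_j \in \lnspath^\circ_\infty$ and $c(\lnsg_j)$ is on or below $\lnsk$, so $\lnsg_0 \preceq \lnsg_j$. Maximality gives $\lnsg_j \preceq \lnsg_0$, forcing $c(\lnsg_j)=\lnsk$. Thus between successive returns to $\lnsk$, all companions visited lie at or above $\lnsk$. In particular $c(\lnsg_i)$ is at or above $\lnsk$, and by the same tree argument $\lnsg_i$ is above $\lnsk$.

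\textbf{Step 2 (focus trace).} This is the main obstacle. I would prove, by induction along a segment of the looping path from a pass through $\lnsk$, that every LNS visited carries a focused formula. Concretely, I claim that every node on an upward path from $\lnsk$ to any bud $\lnsg_j \in \lnspath^\circ_\infty$ is focused. For each such bud, the segment of the looping path from its companion to itself is focused by clause (3). The looping path within a segment is a concatenation of upward pieces from $\lnsk$ or from intermediate companions $c(\lnsg_j)$, each lying between $\lnsk$ and the next bud. Hence each node on the tree path from $\lnsk$ to $c(\lnsg_i)$ is covered by some focus-continuous piece ending at a bud $\lnsg_j$ with $c(\lnsg_j)$ at or above $\lnsk$, and more precisely the piece from $c(\lnsg_j)$ to $\lnsg_j$.

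The delicate point is covering the stretch from $\lnsk$ to the lowest such companion. Here I would use $\lnsg_0$ itself: its trace from $\lnsk$ is focused, and the looping path leaves $\lnsk$ along the $\lnsg_0$ branch or along another branch reaching some bud. In either case the relevant companion path starts at $\lnsk$ or passes through $\lnsk$, so nodes just above $\lnsk$ on the path to $c(\lnsg_i)$ are interior to some companion-to-bud path starting at or below them. If the path starts strictly below $\lnsk$, that contradicts Step 1, so it starts exactly at $\lnsk$. Being focused at $\lnsk$ and never passing through $\annou$/$\annof$ along these pieces (Lemma after Remark~\ref{remark: annotations}), the focus is maintained. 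Chaining the pieces then yields a continuous focus trace from $\lnsk$ to $\lnsg_i$.
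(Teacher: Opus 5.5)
Your Step~1 is sound. The ``first bud whose companion is not strictly above $c(\lnsg_0)$'' argument shows directly from $\preceq$-maximality that every bud in $\lnspath^\circ_\infty$ has its companion at or above $c(\lnsg_0)$; the paper obtains this from Lemma~\ref{lem: maximal leaf}.

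The gap is in Step~2. You correctly reduce the problem to a covering claim: every node $x$ on the tree path from $c(\lnsg_0)$ to $c(\lnsg_i)$ lies on some companion-to-bud path from $c(\lnsg_j)$ to $\lnsg_j$. Clause~(3) then makes $x$ focused, and since a continuous focus trace is a nodewise condition, nothing more is needed. But your ``Hence'' does not establish this covering. A piece of the looping path runs from a child of $c(\lnsg_j)$ up to the \emph{next} bud $\lnsg_{j'}$, which is generally a different bud. If $c(\lnsg_{j'})$ lies strictly above $c(\lnsg_j)$, the stretch between the two companions is covered by neither cycle.

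The ``delicate point'' paragraph does not repair this, for three reasons:
\begin{itemize}
\item There is no reason the first bud reached after leaving $c(\lnsg_0)$ has companion $c(\lnsg_0)$; it may close an inner cycle with a higher companion.
\item The argument only treats nodes immediately above $c(\lnsg_0)$.
\item The final appeal to the lemma after Remark~\ref{remark: annotations} runs in the wrong direction. That lemma says a continuous focus trace excludes $\annou$ and $\annof$; it does not say their absence preserves focus. In fact focus is lost without any focusing rule in the left premise of $\untilli$ or $\untillii$ with the focused formula principal (e.g.\ $q \sar p \until q$ in Figure~\ref{fig:example-cyc-proof}). Besides, you cannot know these pieces avoid $\annou$ and $\annof$ without already knowing they are focused.
\end{itemize}

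The missing idea is an \emph{exit argument}. Let $x$ be on the path from $c(\lnsg_0)$ to $c(\lnsg_i)$ but not on the path from $c(\lnsg_0)$ to $\lnsg_0$; nodes on the latter are already covered by $\lnsg_0$'s own cycle. Then $\lnsg_0$ lies outside the subtree rooted at $x$, while $\lnsg_i$ lies inside it. Since $\lnspath^\circ$ passes through both infinitely often, it must leave that subtree. It can only do so by passing through a bud $\lnsg_m$ inside the subtree whose companion lies strictly below $x$. Hence $x$ lies on the path from $c(\lnsg_m)$ to $\lnsg_m$ and is focused by clause~(3).

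The paper packages exactly this step as an induction. Take a bud $\lnsg'$ whose companion is not on the path from $c(\lnsg_0)$ to $\lnsg_0$. The paper picks a bud of $\lnspath^\circ_\infty$ inside the subtree rooted at $c(\lnsg')$ whose own companion lies outside that subtree. It then applies the inductive hypothesis to that bud, whose upward path from $c(\lnsg_0)$ passes through $c(\lnsg')$, to cover the stretch from $c(\lnsg_0)$ to $c(\lnsg')$.
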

\begin{proof}
    Recall that by Lemma~\ref{lem: maximal leaf} every $\preceq$-maximal bud in $\lnspath^\circ_\infty$ has the same companion and therefore an upward path from $c(\lnsg_0)$ to $\lnsg_i$ exists for every $\lnsg_i \in \lnspath^\circ_\infty$. We proceed by induction on the cardinality of $\lnspath^\circ_\infty$. The base case where $\lnspath^\circ_\infty = \set{\lnsg_0}$ follows immediately, since by definition the upward path $\lnspath_0$ from $c(\lnsg_0)$ to $\lnsg_0$ has a continuous focus trace. For the inductive step, suppose $\lnspath^\circ_\infty = \set{\lnsg_0,  \ldots, \lnsg_n, \lnsg_{n+1}}$ and the upward path from $c(\lnsg_0)$ to $\lnsg_i$ for $0 \leq i \leq n$ has a continuous focus trace. First, suppose $c(\lnsg_{n+1})$ lies on $\lnspath_0$. By definition the upward path from $c(\lnsg_{n+1})$ to $\lnsg_{n+1}$ has a continuous focus trace. Moreover, the upward path from $c(\lnsg_0)$ to $c(\lnsg_{n+1})$ is an initial segment of $\lnspath_0$ and thus has a continuous focus trace. Hence, the upward path from  $c(\lnsg_0)$ to $\lnsg_{n+1}$ has a continuous focus trace as well. 
    
    Second, suppose $c(\lnsg_{n+1})$ does not lie on $\lnspath_0$. Then, $\lnsg_{n+1}$ is not $\preceq$-maximal by Lemma~\ref{lem: maximal leaf} and our assumption. Therefore, there exists an upward path from $c(\lnsg_0)$ to $c(\lnsg_{n+1})$. Consider the subtree $\pi_{n+1}$ of $\pi$ rooted at $c(\lnsg_{n+1})$. Since $\lnspath^\circ$ passes infinitely often through both $\lnsg_0$ and $\lnsg_{n+1}$, there must exist $1 \leq i \leq n$ such that $\lnsg_i$ belongs to $\pi_{n+1}$ and $c(\lnsg_i)$ does not belong to $\pi_{n+1}$. By induction hypothesis the upward path $\lnspath_i$ from $c(\lnsg_0)$ to $\lnsg_i$ has a continuous focus trace. Note that $\lnspath_i$ passes through $c(\lnsg_{n+1})$. Since the upward path from $c(\lnsg_{n+1})$ to $\lnsg_{n+1}$ has a continuous focus trace, the upward path from $c(\lnsg_0)$ to $\lnsg_{n+1}$ must have a continuous focus trace as well.
\end{proof}

\begin{lemma}\label{lem: stable annotation}
Let $(\pi, L, c)$ be a cyclic proof and $\lnspath^\circ$ an infinite looping path. Then, $\lnspath^\circ$ has a suffix with a continuous focus trace.
\end{lemma}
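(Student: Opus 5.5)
We build the suffix from the upward-path segments between visits to the companion $c(\lnsg_0)$, where $\lnsg_0$ is a $\preceq$-maximal leaf of $\lnspath^\circ_\infty$.

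\textbf{Step 1: choose the starting point.} Since $L$ is finite and $\lnspath^\circ_\infty$ is nonempty, fix a $\preceq$-maximal leaf $\lnsg_0$ of $\lnspath^\circ_\infty$. Pick a position $N$ such that after $N$ the looping path $\lnspath^\circ$ passes only through buds in $\lnspath^\circ_\infty$. Because $\lnspath^\circ$ passes through $\lnsg_0$ infinitely often, there is some position $m \geq N$ at which $\lnspath^\circ$ has just jumped from $\lnsg_0$ back to $c(\lnsg_0)$. Concretely, position $m$ is the occurrence of $\lnsg_0$ itself, which shares its focus status with $c(\lnsg_0)$: both have the same end component, and the path between them has a continuous focus trace. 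We claim that the suffix of $\lnspath^\circ$ starting at $m$ has a continuous focus trace.

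\textbf{Step 2: the suffix stays in the subtree of $c(\lnsg_0)$.} We show by induction along the suffix that every $\lnsg_k$ with $k > m$ lies in the subtree rooted at $c(\lnsg_0)$, and that $\lnsg_k$ is joined to $c(\lnsg_0)$ by an upward path. Between bud passages this is clear, since the looping path moves to children. When the suffix passes through a bud $\lnsg_i \in \lnspath^\circ_\infty$, it continues at a child of $c(\lnsg_i)$.

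\begin{itemize}
\item By Lemma~\ref{lem: maximal leaf}, the maximal leaves share their companion.
\item Otherwise $\lnsg_i$ is not maximal, and there is an upward path from $c(\lnsg_0)$ to $c(\lnsg_i)$.
\end{itemize}

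A cleaner way to guarantee $c(\lnsg_i)$ is in this subtree is the following. $\lnsg_i$ itself was reached inside the subtree by the induction hypothesis, and $c(\lnsg_i)$ lies strictly below $\lnsg_i$ on the root path. So $c(\lnsg_i)$ is either in the subtree or strictly below $c(\lnsg_0)$. The latter would give $\lnsg_0 \preceq \lnsg_i$ without the converse holding, unless the companions coincide, contradicting maximality. Hence $c(\lnsg_i)$ lies in the subtree rooted at $c(\lnsg_0)$, as needed.

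\textbf{Step 3: every sequent in the suffix is focused.} Let $k > m$, and let $i$ be the last bud passage before $k$, through a bud $\lnsg_j$. Then $\lnsg_k$ lies on the upward path from $c(\lnsg_j)$ to some subsequent leaf; since it is reached before the next bud passage, it lies on an upward path from $c(\lnsg_j)$ toward $\lnsg_j$'s subtree. More robustly, we argue as follows. The next bud passed after $k$ is some $\lnsg_l \in \lnspath^\circ_\infty$, and $\lnsg_k$ lies on the upward path from $c(\lnsg_0)$ to $\lnsg_l$, because $\lnsg_k$ is in the subtree of $c(\lnsg_0)$ by Step 2 and is an ancestor of $\lnsg_l$. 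By Lemma~\ref{lem: stable annotation0}, that path has a continuous focus trace, so $\lnsg_k$ contains a formula in focus. Since every element of the suffix sits on such a path, the suffix has a continuous focus trace.

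\textbf{Main obstacle.} The delicate point is Step 2: showing that jumps to companions never leave the subtree of $c(\lnsg_0)$, which is what makes each visited sequent an ancestor of a bud in $\lnspath^\circ_\infty$ lying above $c(\lnsg_0)$. I would argue it via the tree structure, using that companions lie on root paths, together with the $\preceq$-maximality of $\lnsg_0$ and Lemma~\ref{lem: maximal leaf}.
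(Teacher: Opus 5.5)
Your proof is correct and follows the paper's argument: fix a $\preceq$-maximal bud $\lnsg_0$ of $\lnspath^\circ_\infty$, start the suffix at an occurrence of $\lnsg_0$ after which only buds in $\lnspath^\circ_\infty$ are passed, show that every later LNS lies on an upward path from $c(\lnsg_0)$ to such a bud, and conclude with Lemma~\ref{lem: stable annotation0}. Your Step~2 (a companion lying strictly below $c(\lnsg_0)$ would contradict the maximality of $\lnsg_0$) makes explicit a containment that the paper only asserts. In the write-up, drop the abandoned false starts and, in Step~3, take the first bud \emph{at or after} position $k$, since $\lnsg_k$ may itself be a bud.
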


\begin{proof} Let $\lnspath^\circ = \set{\lnsg_i}_{i \in \mathbb{N}}$ and let $\lnsg$ be $\preceq$-maximal in $\lnspath^\circ_\infty$. Let $j$ be the least natural number such that $\lnsg_j = \lnsg$ and the suffix $\lnspath$ of $\lnspath^\circ$ starting at $\lnsg_j$ only passes through leaves in $\lnspath^\circ_\infty$. By assumption and Lemma~\ref{lem: maximal leaf} every LNS of $\lnspath$ lies on the upward path from $c(\lnsg)$ to some $\lnsh \in \lnspath^\circ_\infty$. By Lemma~\ref{lem: stable annotation0}, every LNS in $\lnspath$ has a formula in focus and so $\lnspath$ has a continuous focus trace.
\end{proof}

Now we turn toward proving the Shifting Lemma (see Lemma~\ref{lem:shifting} below). We begin by introducing a notion of isomorphism between cyclic proofs. Let $\der$ and $\der'$ be derivations in $\calc$. A \emph{homomorphism} from $\der$ to $\der'$ is a function $h$ mapping occurrences of LNSs in $\der$ to occurrences of LNSs in $\der'$ such that whenever a rule $\ru$ occurs in $\der$ with premises $\lnsg_{1},\ldots,\lnsg_{n}$ and conclusion $\lnsg$, the same rule $\ru$ occurs in $\der'$ with premises $h(\lnsg_{1}),\ldots,h(\lnsg_{n})$ and conclusion $h(\lnsg)$.\footnote{Note that homomorphisms map \emph{occurrences} of linear nested sequents, so two distinct occurrences of the same LNS $\lnsg$ may map to different LNSs $\lnsh$ and $\lnsf$.} We say that $\der$ and $\der'$ are \emph{isomorphic} if there exists a homomorphism $h:\der\to\der'$ whose inverse $h^{-1}$ exists and is also a homomorphism, and call $h$ an \emph{isomorphism}. Isomorphisms preserve the structure of derivations, but not necessarily their \emph{content}: isomorphic derivations may be derivations of different LNSs. A \emph{strict isomorphism} is an isomorphism $h: \pi \to \pi'$ which satisfies the following two properties for any $\lnsg$ in $\pi$: (1) $\lnsg$ is saturated iff $h(\lnsg)$ is saturated and (2) $\last{\lnsg} = \last{h(\lnsg)}$; in such a case, we call $\pi$ and $\pi'$ \emph{strictly isomorphic}. We call cyclic proofs $(\pi, L, c)$ and $(\pi', L', c')$ \emph{strictly isomorphic} if there exists a strict isomorphism $\shift: \prf \to \prf'$ such that (1) $\lnsf \in L$ iff $\shift(\lnsf) \in L'$ and (2) for any $\lnsf \in L$, $h(c(\lnsf)) = c'(h(\lnsf))$.

\begin{example} To illustrate the unraveling procedure, we provide an example. Consider the cyclic proof $(\prf,L,c)$ in Figure~\ref{fig:example-cyc-proof} whose conclusion is $c(\lnsg) = \lnsh =  p \until q^\annof \sar p \until q$. As shown in Figure~\ref{fig:example-unravel}, starting from the bud $\lnsg$, we shift the rule applications forward by one component to obtain a proof $(\prf',L',c')$ of $\lnsg$. This procedure can be repeated on the bud $p \sar q \sep p \sar q \sep p \until q^\annof \sar p \until q$, and hence iterated ad infinitum to extract a non-wellfounded proof. Observe that $(\prf,L,c)$ and $(\prf',L',c')$ are strictly isomorphic.
\end{example}

We now introduce notation used in the lemma below. A \emph{context} is an LNS with a \emph{hole} $\nshole$, which replaces a single Gentzen sequent in an LNS (cf.~\cite{Bru09,Lel15}). For example, $\lnsg\nshole = A,B \sar \sep \nshole \sep C \sar D$ is a context. In a context $\lnsg\nshole$, substituting an LNS $\lnsh$ for the hole yields the LNS $\lnsg\set{\lnsh}$.\footnote{Technically, $\lnsg\set{\lnsh}$ is only a well-defined LNS if the conditions placed on the annotations are satisfied. For example if both $\lnsg\nshole$ and $\lnsh$ were to contain a formula in focus, then $\lnsg\set{\lnsh}$ would contain two formulae in focus and thus would not be a well-defined LNS. We ignore this issue here, as this situation does not occur in any of the following proofs.} For example, substituting $\lnsh = E \sar F \sep G \sar$ in the context $\lnsg\nshole$ above yields $\lnsg\set{\lnsh} = A,B \sar \sep E \sar F \sep G \sar \sep C \sar D$. We may write $\lnsg\set{\Gamma \sar \Delta}_{i}$ to indicate that $\Gamma \sar \Delta$ is the $i$-component of $\lnsg$ and $\lnsg\set{\Gamma \sar \Delta \sep \Sigma \sar \Pi}_{i}$ to indicate that $\Gamma \sar \Delta$ is the $i$-component and $\Sigma \sar \Pi$ is the $i+1$-component of $\lnsg$.

\begin{figure}
\begin{center}
\begin{tabular}{c c}
$\prf_{1}' \ :=$
&
\AxiomC{}
\RightLabel{$\id$}
\UnaryInfC{$p \sar q \sep q \sar p, q$}
\AxiomC{}
\RightLabel{$\id$}
\UnaryInfC{$p \sar q \sep q \sar q \sep \sar p \until q$}
\RightLabel{$\untilrii$}
\BinaryInfC{$p \sar q \sep q \sar p \until q$}
\DisplayProof
\end{tabular}
\end{center}

\begin{center}
\begin{tikzpicture}
\node[] (a) [] {\AxiomC{$\prf_{1}'$}

\AxiomC{}
\RightLabel{$\id$}
\UnaryInfC{$p \sar q \sep p \sar p, q \sep p \until q^\annof \sar$}

\AxiomC{$p \sar q \sep p \sar q \sep p \until q^\annof \sar p \until q$}
\RightLabel{$\untilri$}
\BinaryInfC{$p \sar q \sep p \sar p \until q \sep p \until q^\annof \sar$}
\RightLabel{$\untillii$}
\BinaryInfC{$p \sar q \sep p \until q^\annof \sar p \until q$}
\DisplayProof};


\draw[->,rounded corners=.25cm,dashed]
  (2.2,.75) -- (2.2,1) -- (7.2,1) -- node[right,midway]{$c$} 
  (7.2,-.65) -- (.2,-.65);

\end{tikzpicture}
\end{center}

\caption{Shifting Lemma Example.\label{fig:example-unravel}}
\end{figure}

\begin{lemma}[Shifting Lemma]\label{lem:shifting}
Let $\tuple{\der,L,c}$ be a cyclic proof of $\lnsg$ with $\lnsg = c(\lnsg')$ and $\lnsg' \in L$. Then, there exists a strictly isomorphic cyclic proof $\tuple{\der',L',c'}$ of $\lnsg'$.
\end{lemma}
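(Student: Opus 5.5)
The plan is to exploit the structure of the cycle condition. Since $\lnsg = c(\lnsg')$ is saturated, $\size{\lnsg} < \size{\lnsg'}$, $\lnsg'$ is saturated and $\last{\lnsg} = \last{\lnsg'}$, we can write $\lnsg = \lnsk \sep \Gamma \sar \Delta$ and $\lnsg' = \lnsk' \sep \Gamma \sar \Delta$. Here all components of $\lnsk$ and $\lnsk'$ contain no complex formulae, and $\lnsk'$ is strictly longer than $\lnsk$. Write $d := \size{\lnsg'} - \size{\lnsg} > 0$. The idea, as in Figure~\ref{fig:example-unravel}, is to replace the prefix $\lnsk$ by $\lnsk'$ throughout $\der$ and shift every rule application forward by $d$ components.

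Concretely, I would define the candidate map $\shift$ by induction on the tree $\der$, from the root upward. Every LNS $\lnsf$ in $\der$ will be shown to have the form $\lnsk \sep \lnsf_0$, where the prefix $\lnsk$ is left unchanged. We then set $\shift(\lnsf) := \lnsk' \sep \lnsf_0$, and the shift is defined as the invariant to be maintained. The invariant says that the prefix $\lnsk$ of every sequent in $\der$ is literally the original prefix of $\lnsg$.

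This follows from a key observation: no rule of $\calc$ can have a principal formula in a component of $\lnsk$, because those components are saturated (they contain only atoms and $\bot$, and contain no focused until-formula). Moreover, rules only ever add formulae to the principal component or to the component to its right. So the only component of $\lnsk$ that could change is none: a rule acting on component $|\lnsk|$ writes into $|\lnsk|+1$ or creates a new component, never into $\lnsk$. The one exception would be $\id$ or $\botl$ firing inside $\lnsk$. That cannot happen, since $\lnsg$ is saturated and therefore not initial, and $\lnsk$ never changes above it.

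Next I would check that each rule instance $\ru$ with conclusion $\lnsk \sep \lnsf_0$ and premises $\lnsk \sep \lnsf_i$ transforms into an instance of the same $\ru$ with conclusion $\lnsk' \sep \lnsf_0$ and premises $\lnsk' \sep \lnsf_i$. This is a routine case check, because the rules are context-closed on the left. The $\mathsf{L}_2/\mathsf{R}_2$ versus $\mathsf{L}_1/\mathsf{R}_1$ distinction depends only on whether the principal component is the end component, and that is preserved. Annotation constraints are also preserved: $\lnsk'$ is saturated, so it contains no focused formula, and the single focus stays inside $\lnsf_0$. Initial leaves stay initial, because the matching $p$ or $\bot$ lies in $\lnsf_0$. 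This makes $\shift$ an isomorphism with an evident inverse, and $\shift$ preserves $\last{\cdot}$ trivially.

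Saturation must also be preserved in both directions. Saturation of $\lnsk \sep \lnsf_0$ amounts to $\lnsk$ and the interior part of $\lnsf_0$ being free of complex formulae, plus non-initiality. Both $\lnsk$ and $\lnsk'$ are free of complex formulae and contain no initial-forming pair, since each is a part of a saturated, hence non-initial, sequent. So the condition reduces to the same condition on $\lnsf_0$. The degenerate case $\lnsf_0$ of length one needs a brief remark, because there $\lnsf_0$ is the end component and the interiority of the $\lnsk$-components is what matters.

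Finally I would set $L' := \shift[L]$ and $c'(\shift(\lnsf)) := \shift(c(\lnsf))$, and verify the bud conditions of Definition~\ref{dfn:cyclic-proof}. Lengths all increase by the same $d$, so $\size{c'(\cdot)} < \size{\cdot}$ still holds. Focus traces and passage through the right premise of $\untilli$ or $\untillii$ with the focused formula principal transfer, because rule instances and annotations correspond exactly. Saturation and end components are preserved by strictness.

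The step I expect to be the main obstacle is the precise invariant. We must argue cleanly that the prefix $\lnsk$ is untouched everywhere in $\der$, including at buds, whose own longer prefixes extend $\lnsk$. We must also argue that the shifted instances are legal in the boundary cases where the principal component is the first component of $\lnsf_0$, for example $\nextli$ or $\untilli$ acting from component $|\lnsk|$ into component $|\lnsk|+1$. I would handle this by stating the invariant as ``$\lnsf = \lnsk \sep \lnsf_0$ with $\size{\lnsf_0}\ge 1$'' and proving it by induction on depth together with the rule correspondence.
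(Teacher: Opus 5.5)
Your proposal is correct and follows essentially the paper's proof. The paper likewise builds $\der'$ together with a strict isomorphism $\shift$ by induction on depth, moving every rule application past the untouched saturated interior of $\lnsg$, and then sets $L' := \shift[L]$ and $c'(\shift(\lnsf)) := \shift(c(\lnsf))$. Your description of the map as replacing the fixed prefix $\lnsk$ by $\lnsk'$ is the same map as the paper's insertion of extra components after a common prefix, and it makes the condition $\size{c'(\cdot)} < \size{\cdot}$ immediate from the uniform length increase $d$.
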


\begin{proof} Let $\lnsg = \lnsh \sslash \Gamma \sar \Delta$ and $\lnsg' = \lnsh \sslash \lnsk \sslash \Gamma \sar \Delta$ such that $\size{\lnsk} \geq 1$, $\size{\lnsg} = n$, and $\size{\lnsg'} = k$. By definition, $\lnsg$ and $\lnsg'$ are saturated. We simultaneously define a derivation $\pi'$ of $\lnsg'$ and a strict isomorphism $\shift$ from LNSs in $\prf$ to LNSs in $\pi'$ in a stepwise manner. We refer to $\prf$ as the `input derivation' and to $\prf'$ as the `output derivation'. We show by induction on the depth $d$ of $\lnsf$ in $\prf$ that the following invariant (I) holds, which is a conjunction of three statements: (a) $h$ is an isomorphism up to depth $d$, (b) $\lnsf$ is saturated \iffi $\shift(\lnsf)$ is saturated, and (c) if $i \geq n$, then $\lnsf(i) = \shift(\lnsf)(j)$ with $j = k + (i - n)$.

We remark that condition (c) ensures that components at or
beyond the index $n$ in LNSs of the input derivation are preserved
via $h$ in the output derivation, but shifted forward by $(k - n)$
positions. This shift reflects the fact that $\lnsg'$ extends $\lnsg$ with the additional components of $\lnsk$. Moreover, since $\lnsg$ is saturated, rule applications in the input derivation are confined to $i$-components with $i \geq n$, and these are precisely the components that are relocated from the index $i$ to the index $k + (i - n)$ in the output.

\case{Base case.} We let the derivation of $\lnsg'$ be the tree consisting of the single node $\lnsg'$ and set $\shift(\lnsg) := \lnsg'$. Note that (I) holds by the shape of $\lnsg$ and $\lnsg'$.


\case{Inductive step.} We simultaneously treat every node $\lnsf$ of depth $d>0$ in $\prf$ and make a case distinction on the rule applied to $\lnsf$ in $\prf$. We show the $\impr$ and $\untilli$ cases as the remaining cases are similar. As noted above, since $\lnsg$ is saturated, all rule applications occur at $i$-components with $i \geq n$.

\case{Case for $\impr$.}Suppose $\impr$ is applied bottom-up (shown below left) to the $i$-component of $\lnsf$ in $\prf$ with $d$ the depth of $\lnsf$ in $\prf$. By IH, we have constructed the output derivation and map $\shift$ up to depth $d$ and know that the $i$-component of $\lnsf$ is equal to the $j$-component of $\shift(\lnsf)$. We extend the output derivation by bottom-up applying $\impr$ to the $j$-component of $\shift(\lnsf)$, as shown below right.
\begin{center}
\begin{tabular}{c c}
\AxiomC{$\lnsf\set{\Sigma, A \sar B, \Pi}_{i}$}
\RightLabel{$\impr$}
\UnaryInfC{$\lnsf\set{\Sigma \sar A \imp B, \Pi}_{i}$}
\DisplayProof

&

\AxiomC{$\shift(\lnsf)\set{\Sigma, A \sar B, \Pi}_{j}$}
\RightLabel{$\impr$}
\UnaryInfC{$\shift(\lnsf)\set{\Sigma \sar A \imp B, \Pi}_{j}$}
\DisplayProof
\end{tabular}
\end{center}
We set $\shift(\lnsf\set{\Sigma, A \sar B, \Pi}_{i}) := \shift(\lnsf)\set{\Sigma, A \sar B, \Pi}_{j}$. Note that $\shift$ is an isomorphism between the input and output derivations up to depth $d+1$, and so (a) is satisfied. Moreover, for all $m \neq i$ with $m \geq n$, the $m$-component has been unaffected by the $\impr$ application, so the $m$-component of $\lnsf\set{\Sigma, A \sar B, \Pi}_{i}$ and the $k+(m-n)$-component of $\shift(\lnsf)\set{\Sigma, A \sar B, \Pi}_{j}$ are equal. Also, observe that the $i$-component of $\lnsf\set{\Sigma, A \sar B, \Pi}_{i}$ is equal to the $j$-component of $\shift(\lnsf)\set{\Sigma, A \sar B, \Pi}_{j}$. This implies that (b) and (c) hold as well and so that (I) holds.

\case{Case for $\untilli$.} Suppose $\untilli$ is applied bottom-up to the $i$-component of $\lnsf$ in $\prf$ with $d$ the depth of $\lnsf$ in $\prf$, where $\anno \in \set{\annou, \annof}$.
\begin{center}
    \AxiomC{$\lnsf\set{\Gamma, B \sar \Delta \sep \Sigma \sar \Pi}_i$}
    \AxiomC{$\lnsf\set{\Gamma, A  \sar \Delta \sep \Sigma, A \until B^\anno \sar \Pi}_i$}
    \BinaryInfC{$\lnsf\set{\Gamma, A \until B^\anno \sar \Delta \sep \Sigma \sar \Pi}_i$}
    \DisplayProof
\end{center}
By IH we have constructed the output derivation and map $\shift$ up to depth $d$ and know that the $i$- and $i+1$-components of $\lnsf$ are equal to the $j$- and $j+1$-components of $\shift(\lnsf)$, respectively. We extend the output derivation by bottom-up applying $\untilli$ to the $j$-component of $\shift(\lnsf)$, as shown below.
\begin{center}
    \AxiomC{$h(\lnsf)\set{\Gamma, B \sar \Delta \sep \Sigma \sar \Pi}_j$}
    \AxiomC{$h(\lnsf)\set{\Gamma, A  \sar \Delta \sep \Sigma, A \until B^\anno \sar \Pi}_j$}
    \BinaryInfC{$h(\lnsf)\set{\Gamma, A \until B^\anno \sar \Delta \sep \Sigma \sar \Pi}_j$}
    \DisplayProof
\end{center}
Note that if $\anno = \annof$, then since $h(\lnsf)\set{\Gamma, A \until B^\annof \sar \Delta \sep \Sigma \sar \Pi}_j$ is a well-defined LNS, so are both premises of the rule instance, implying that the output derivation is well-defined. We set $\shift(\lnsf\set{\Gamma, B \sar \Delta \sep \Sigma \sar \Pi}_i):= h(\lnsf)\set{\Gamma, B \sar \Delta \sep \Sigma \sar \Pi}_j$ and $\shift(\lnsf\set{\Gamma, A  \sar \Delta \sep \Sigma, A \until B^\anno \sar \Pi}_i) := h(\lnsf)\set{\Gamma, A  \sar \Delta \sep \Sigma, A \until B^\anno \sar \Pi}_j$. Note that $h$ is an isomorphism between the input and output derivations up to depth $d+1$, and so (a) is satisfied. Moreover, by an argument similar to the $\impr$-case, one can verify that conditions (b) and (c) are satisfied as well and so that (I) holds. This concludes the construction of the output derivation $\prf'$ and of $\shift$.

We now show how to define a cyclic proof of $\lnsg'$ based on $\prf'$ that is strictly isomorphic to $(\pi, L, c)$. First, observe that (I) implies that $h$ is a strict isomorphism between $\pi$ and $\pi'$. Let $(c(\lnsf),\lnsf)$ form a cycle in $(\prf, L, c)$. By the definition of $\shift$, we know that a path occurs from $\shift(c(\lnsf))$ to $\shift(\lnsf)$ in $\prf'$ with $h(\lnsf)$ a leaf of $\pi'$ and that since $\size{c(\lnsf)} < \size{\lnsf}$, the path from $c(\lnsf)$ to $\lnsf$ passes through an expansion rule and so the path from $h(c(\lnsf))$ to $h(\lnsf)$ passes through an expansion rule as well, implying that $\size{h(c(\lnsf))} < \size{h(\lnsf)}$. Moreover, by the invariant (I), we know that $h(\lnsf)$ and $h(c(\lnsf))$ are saturated since $\lnsf$ and $c(\lnsf)$ are saturated, $\last{\lnsf} = \last{h(\lnsf)}$, and $\last{c(\lnsf)} = \last{h(c(\lnsf))}$. Therefore, since $\last{c(\lnsf)} = \last{\lnsf}$, we have $\last{h(c(\lnsf))} = \last{h(\lnsf)}$. Since $(c(\lnsf), \lnsf)$ is a cycle in a cyclic proof, the path from $c(\lnsf)$ to $\lnsf$ has a continuous focus trace that passes through the right premise of a rule $\ru \in \set{\untilli,\untillii}$ with the formula in focus principal. By construction of $\prf'$, $h$ and by (I), the path from $h(c(\lnsf))$ to $h(\lnsf)$ therefore has a continuous focus trace that passes through the right premise of a rule $\ru \in \set{\untilli,\untillii}$ with the formula in focus principal. Let $L' := \set{h(\lnsf) \mid \lnsf \in L}$ and let $c'$ map each $h(\lnsf) \in L'$ to $h(c(\lnsf))$. For any other leaf $h(\lnsh)$ of $\pi'$, $\lnsh$ is a leaf of $\pi$ which does not belong to $L$ and therefore is an initial sequent. By (I), $h(\lnsh)$ is an initial sequent as well; therefore, $(\pi', L', c')$ is a cyclic proof of $\lnsg'$ that is strictly isomorphic to $(\pi, L,c)$.
\end{proof}

Next, we use Lemma~\ref{lem:shifting} to obtain an infinite sequence of `growing' cyclic proofs and define the limit of this sequence. Let $(\pi_0, L_0, c_0)$ be a cyclic proof of $\lnsg$ with $L_0 \not = \emptyset$. Let $\lnsh_0 = c_{0}(\lnsh_1)$ be a \emph{minimal companion} in $\pi_0$, i.e., a companion node such that no other companion occurs on the path from $\lnsg$ to $\lnsh_{0}$, and let $(\pi, L, c)$ denote the cyclic proof rooted at $\lnsh_0$ in $(\pi_0, L_0, c_0)$.

By Lemma~\ref{lem:shifting}, we can transform $\tuple{\der,L,c}$ into a strictly isomorphic cyclic proof $\tuple{\der',L',c'}$ of $\lnsh_{1}$ such that $\lnsh_{1} = c'(\lnsh_{2})$ for some leaf $\lnsh_{2}$ in $L'$. We construct a new cyclic proof $\tuple{\der_1,L_1,c_1}$ of $\lnsg$ as follows: first, we `paste' $\der'$ above $\lnsh_{1}$ in $\der_0$ to obtain a new derivation $\der_1$. Second, we set $L_1 := (L_0 \setminus \set{\lnsh_{1}}) \cup L'$. Third, we define $c_1$ as follows:
\begin{equation*}
c_1(\lnsf) = 
\begin{cases} 
c_0(\lnsf) & \text{if $\lnsf \in L_0 \setminus \set{\lnsh_{1}}$,} \\
c'(\lnsf) & \text{if $\lnsf \in L'$.}
\end{cases}
\end{equation*}
We first argue that $\tuple{\der_1, L_1, c_1}$ is a cyclic proof of $\lnsg$. Indeed, $\der_1$ is a finite derivation of $\lnsg$ since $\der_0$ is a finite derivation of $\lnsg$ and $\der'$ is a finite derivation of $\lnsh_1$. For condition~(3) of Definition~\ref{dfn:cyclic-proof}, consider first a bud
$\lnsf \in L_0 \setminus \{\lnsh_1\}$. Since the path from the root to $\lnsf$ in $\der_1$ is identical to the corresponding path in $\der_0$, the cycle $(c_0(\lnsf), \lnsf)$ satisfies all required conditions in $\der_1$. Next, consider a bud $\lnsf \in L'$. Since $\tuple{\der', L', c'}$ is a cyclic proof, the path from $c'(\lnsf)$ to $\lnsf$ in $\der'$ is preserved in $\der_1$ and satisfies all required conditions. Finally, every leaf of $\der_1$ not in
$L_1$ is either a leaf of $\der_0$ not in $L_0$ or a leaf of $\der'$ not in $L'$, and is therefore an initial sequent in either case.

Note that the cycle from $\lnsh_{1}$ to $\lnsh_{0}$ in $(\pi_0, L_0, c_0)$ has been shifted upward and replaced by a cycle from $\lnsh_{2}$ to $\lnsh_{1}$ in $\tuple{\der_1,L_1,c_1}$. In general, $\lnsh_0$ may be the companion of several buds, so after repeating the above process for each bud, $\lnsh_0$ is no longer a minimal companion in the generated cyclic proof. By successively repeating the above process for minimal companions, we obtain an infinite sequence $\tuple{\der_0,L_0,c_0}$, $\tuple{\der_1,L_1,c_1}$, $\ldots$ of cyclic proofs. By construction, each $\der_i$ is a tree of LNSs with root $\lnsg$, i.e., $\der_i=(T_i, \leq_i,\lnsg)$ where $T_i$ is a set of nodes (labeled by LNSs) and $(T_i,\leq_i)$ is a poset, such that for each $t \in T_i$, the down-set of $t$ in $T_i$ is well-ordered. Observe that for all $i \in \mathbb{N}$, $T_i \subseteq T_{i+1}$ and $\leq_i \ \subseteq \ \leq_{i+1}$. We define the \emph{limit derivation} of the sequence of cyclic proofs to be $\pi_\omega := (T_\omega, \leq_\omega, \lnsg)$ where $T_\omega := \bigcup_{i \in \mathbb{N}} T_i$ and $\leq_\omega \ := \bigcup_{i \in \mathbb{N}} \leq_i$. Note that $\pi_\omega$ is a well-defined derivation of $\lnsg$ in $\calci$. 

In order to show that every infinite branch $\branch$ of $\pi_\omega$ is progressing, we show that every such $\branch$ \emph{corresponds to} a looping path of $(\pi_0, L_0,c_0)$ in the following sense:

\begin{definition}\label{dfn:corresponds} Given $j,n \in \mathbb{N}$ and finite looping paths $\lnspath_{j}^\circ = \set{\lnsg_i}_{i \leq k}$ of $(\pi_{j}, L_{j}, c_{j})$ and $\lnspath_n^\circ = \set{\lnsh_i}_{i \leq k}$ of $(\pi_n, L_n, c_n)$ of equal length, the looping path $\lnspath^\circ_n$ \emph{corresponds to} $\lnspath^\circ_j$ \iffi for all $i \leq k$:
\begin{enumerate}

\item[$(1)$] $\lnsh_i$ is the conclusion of an instance of $\mathsf{r} \in \calc$ \iffi $\lnsg_i$ is the conclusion of an instance of $\mathsf{r}$;

\item[$(2)$] $\lnsg_i$ is the conclusion of a binary rule $\mathsf{r}$ and $\lnsg_{i+1}$ is the right (resp. left) premise if and only if $\lnsh_i$ is the conclusion of $\mathsf{r}$ and $\lnsh_{i+1}$ is the right (resp. left) 
premise;

\item[$(3)$] $\lnsg_i$ contains a formula in focus \iffi $\lnsh_i$ contains a formula in focus, and if $\lnsg_i$ is the conclusion of a rule $\mathsf{r}$ whose principal formula is in focus, then the principal formula of the rule 
applied to $\lnsh_i$ is in focus (and vice versa);

\item[$(4)$] $\lnsh_i$ is saturated \iffi $\lnsg_i$ is saturated, and if so, $\last{\lnsg_i} = \last{\lnsh_i}$.


\end{enumerate}
An infinite branch $\branch=\set{\lnsh_i}_{i \in \mathbb{N}}$ of $\pi_\omega$ \emph{corresponds to} an infinite looping path $\lnspath^\circ = \set{\lnsg_i}_{i \in \mathbb{N}}$ of $(\pi_0, L_0,c_0)$ \iffi for all $i \in \mathbb{N}$ conditions (1)--(4) are satisfied.
\end{definition}

The proof of the following lemma can be found in the full version of this paper~\cite{LyoZenFull}.

\begin{lemma}\label{cor: corresponding paths} For any $j \in \mathbb{N}$ and every finite looping path $\lnspath^\circ_j$ of $(\pi_{j}, L_{j}, c_{j})$, there exists a finite looping path $\lnspath^\circ_0$ of $(\pi_0, L_0, c_0)$ of equal length, such that $\lnspath^\circ_j$ corresponds to $\lnspath^\circ_0$.
\end{lemma}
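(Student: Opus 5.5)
We argue by induction on $j$. The base case $j=0$ is immediate: we take $\lnspath^\circ_0 := \lnspath^\circ_j$ itself, and conditions (1)--(4) of Definition~\ref{dfn:corresponds} hold trivially. Since correspondence is transitive (each of (1)--(4) is an ``iff'' condition, or an equality, between the $i$-th entries), it suffices to prove the inductive step. That step is the following one-step claim: every finite looping path of $(\pi_{j+1}, L_{j+1}, c_{j+1})$ corresponds to a finite looping path of $(\pi_j, L_j, c_j)$ of equal length. Composing this with the induction hypothesis for $j$ yields a looping path of $(\pi_0,L_0,c_0)$.

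For the one-step claim, recall how $(\pi_{j+1},L_{j+1},c_{j+1})$ arises. We pick a minimal companion $\lnsh_0 = c_j(\lnsh_1)$ and let $(\pi,L,c)$ be the cyclic subproof rooted at $\lnsh_0$. The Shifting Lemma (Lemma~\ref{lem:shifting}) gives a strictly isomorphic $(\pi',L',c')$ of $\lnsh_1$ with isomorphism $\shift$, and $\pi'$ is pasted above $\lnsh_1$. We define a map $g$ from nodes of $\pi_{j+1}$ to nodes of $\pi_j$:
\begin{itemize}
\item[] $g$ is the identity on nodes of $\pi_j$, including $\lnsh_1$, which is the root of $\pi'$ and is sent to itself;
\item[] $g := \shift^{-1}$ on nodes of $\pi'$ strictly above $\lnsh_1$.
\end{itemize}
Given a looping path $\set{\lnsg_i}_{i\le k}$ of $\pi_{j+1}$, we set $\lnsg'_i := g(\lnsg_i)$, with one adjustment. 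When $\lnsg_i$ lies in $\pi'$ strictly above $\lnsh_1$, we interpret $\shift^{-1}(\lnsg_i)$ as a node of $\pi$, that is, of the subtree of $\pi_j$ rooted at $\lnsh_0$. The crossing of $\lnsh_1$ is then handled by the bud step. In $\pi_j$, the node $\lnsh_1$ is a bud with companion $\lnsh_0 = \shift^{-1}(\lnsh_1)$. So the step from $\lnsh_1$ to a child of $\lnsh_1$ in $\pi_{j+1}$ becomes, in $\pi_j$, a step through the bud $\lnsh_1$ to the corresponding child of $c_j(\lnsh_1)=\lnsh_0$. This is exactly clause (3) of Definition~\ref{dfn:looping-path}. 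One subtlety is that the index $i$ at $\lnsh_1$ is kept in place (one entry per entry), which matches the definition, since a looping path places the bud itself and then a child of its companion.

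We must next check that the image is a looping path. For ordinary child steps this holds because $\shift$ is an isomorphism, so children map to children. For bud steps inside $\pi'$, clause (2) of strict isomorphism of cyclic proofs gives $\shift^{-1}(c'(\lnsf)) = c(\shift^{-1}(\lnsf))$ and $\lnsf\in L'$ iff $\shift^{-1}(\lnsf)\in L$. Buds of $L_j\setminus\set{\lnsh_1}$ are untouched. Non-bud leaves map to non-bud leaves.

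It remains to verify conditions (1)--(4) entrywise. Conditions (1) and (2) follow since $\shift$ is a homomorphism preserving rule names and premise order. Condition (4) is exactly the strictness of $\shift$ (saturation and end components are preserved). The one point needing care at $\lnsh_1$ is that $\lnsg'_i$ is $\lnsh_1$ itself, a leaf of $\pi_j$, whereas in $\pi_{j+1}$ the node $\lnsh_1$ is the conclusion of a rule. We resolve this by sending $\lnsh_1$ in $\pi_{j+1}$, when it is not the last entry, to the bud $\lnsh_1$ and comparing rule data using $\lnsh_0$. To make (1)--(3) hold literally, we would rather map the occurrence $\lnsh_1$ to $\lnsh_0$, but the looping path definition requires the bud to appear. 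If the paper's convention in (1) forces exact agreement at bud positions, we would adjust by reading looping paths so that the entry at a bud is identified with its companion. The two are interchangeable because $\lnsh_0$ and $\lnsh_1$ are both saturated, have equal end components, and both carry a focus (they lie on a continuous focus trace).

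The main obstacle is condition (3), preservation of focus. Here we use that the Shifting Lemma construction copies every rule instance with identical annotations on the shifted components. Because $\lnsh_0$ is saturated, all focused formulae live at indices $\ge n$ and are transported verbatim by invariant (c). Hence $\lnsg_i$ has a focused formula iff $\shift^{-1}(\lnsg_i)$ does, and principal formulae in focus correspond. We would make this explicit as an additional invariant clause in the proof of Lemma~\ref{lem:shifting}: every component of index $<n$ (respectively $<k$) carries no focus.
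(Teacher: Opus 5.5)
Your proof is correct and takes essentially the same approach as the paper's (deferred) proof. You use induction on $j$ and a one-step pull-back of looping paths of $(\pi_{j+1},L_{j+1},c_{j+1})$ along $\shift^{-1}$ from the Shifting Lemma, keeping the bud $\lnsh_1$ in place and reading its rule data off $\lnsh_0=\shift^{-1}(\lnsh_1)$, then compose by transitivity of correspondence. Reading conditions (1)--(3) at passed-through buds via their companions is what the proof of Theorem~\ref{thm:cyclic-to-non-wf} needs (taken literally, (1) already fails at $\lnsh_1$), and you are right that focus preservation must come from the Shifting Lemma's construction, since strict isomorphism does not record annotations.
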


We are now ready to prove the main theorem of this section.

\begin{theorem}\label{thm:cyclic-to-non-wf}
Every $\calccyc$-proof can be transformed into an $\calci$-proof satisfying the SRP.
\end{theorem}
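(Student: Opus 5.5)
Given a cyclic proof $(\pi_0,L_0,c_0)$ of $\lnsg$, I take the limit derivation $\pi_\omega$ constructed above and show that it is an $\calci$-proof satisfying the SRP. If $L_0=\emptyset$, then $\pi_0$ is already a finite derivation whose leaves are all initial sequents. It is then trivially an $\calci$-proof, and it satisfies the SRP vacuously. So assume $L_0\neq\emptyset$. The argument has three steps. First, every leaf of $\pi_\omega$ is initial. Second, every infinite branch of $\pi_\omega$ corresponds to an infinite looping path of $(\pi_0,L_0,c_0)$. Third, progressiveness and saturation recurrence transfer along this correspondence.

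\textbf{Leaves.} I first check that in the iteration every bud is eventually replaced. Minimal companions are processed in turn, so a bud $\lnsf\in L_j$ either is removed at some stage $j'>j$ or would stay forever. I would rule out the second case by a fairness argument. The companion of $\lnsf$ sits at finite depth, and each stage eliminates the buds of a minimal companion, pushing the lowest companions strictly upward. After finitely many stages every companion at the depth of $c_j(\lnsf)$ has been processed. I expect that fixing the processing order (for example, breadth-first by depth of companions) makes this routine. Given this, any leaf of $\pi_\omega$ is a leaf of some $\pi_j$ that is never extended, hence not a bud, hence initial by condition (4) of Definition~\ref{dfn:cyclic-proof}.

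\textbf{Correspondence.} Let $\branch=\set{\lnsh_i}_{i\in\mathbb{N}}$ be an infinite branch of $\pi_\omega$. Each finite prefix $\set{\lnsh_i}_{i\le k}$ lies in some $\pi_j$ and is an upward path there, hence a finite looping path of $(\pi_j,L_j,c_j)$. By Lemma~\ref{cor: corresponding paths} it corresponds to a finite looping path of $(\pi_0,L_0,c_0)$ of length $k+1$. The looping paths of $\pi_0$ form a finitely branching tree: from each node, the possible continuations are the children of that node or of its companion. The prefixes of these corresponding paths can be chosen coherently, and if they cannot, I apply K\H{o}nig's lemma to the tree of corresponding finite looping paths. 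Either way I obtain an infinite looping path $\lnspath^\circ$ of $\pi_0$ that $\branch$ corresponds to. \emph{I expect this to be the main obstacle:} the finite correspondences supplied by the lemma need not extend one another. The fix I would try is to restrict attention to the finitely many corresponding paths at each length that are consistent with the rule instances and premise choices along $\branch$ (conditions (1)--(2)), and then take a K\H{o}nig limit.

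\textbf{Transfer.} By Lemma~\ref{lem: stable annotation}, $\lnspath^\circ$ has a suffix with a continuous focus trace. By condition (3) of the correspondence, $\branch$ has a focused LNS at exactly the same positions, so the corresponding suffix of $\branch$ has a continuous focus trace too. The looping path $\lnspath^\circ$ passes infinitely often through buds, and each pass traverses a cycle whose path from companion to bud goes through the right premise of $\untilli$ or $\untillii$ with the focused formula principal. By conditions (1)--(3), $\branch$ passes through such right premises with focused principal formula at the same indices. Hence $\branch$ is progressing. For the SRP, companions and buds are saturated, so $\lnspath^\circ$ meets saturated LNSs infinitely often. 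By condition (4), $\branch$ meets saturated LNSs at the same indices, so infinitely many of its LNSs are saturated. Therefore $\pi_\omega$ is an $\calci$-proof of $\lnsg$ satisfying the SRP.
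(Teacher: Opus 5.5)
Your proposal is correct and follows the same route as the paper. Leaves of $\pi_\omega$ are initial because every bud is eventually unraveled. Each infinite branch corresponds, via Lemma~\ref{cor: corresponding paths}, to an infinite looping path of $(\pi_0,L_0,c_0)$. Progress and the SRP then transfer along conditions (1)--(4), using Lemma~\ref{lem: stable annotation} and the saturation of buds and companions.

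Your differences only make explicit what the paper leaves implicit. You fix the unraveling order to ensure fairness, which the paper tacitly assumes. You also build the infinite looping path with K\H{o}nig's lemma, which works because correspondence is preserved under taking prefixes. The paper instead uses a chain of mutually extending finite looping paths, which silently relies on the corresponding looping path being uniquely determined.
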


\begin{proof} Let $\tuple{\der_0,L_0,c_0}$ be a cyclic proof of $\lnsg$. If $L_0 = \emptyset$, then $\der_0$ is a finite $\calci$-proof of $\lnsg$ and so we are done. Thus, suppose that $L_0 \not = \emptyset$ and let $\pi_\omega$ be the infinite $\calci$-derivation obtained by successively applying the Shifting Lemma to minimal companions as described above. We will now argue that $\pi_\omega$ is a $\calci$-proof. To that end, suppose that $\lnsh$ is a leaf of $\pi_\omega$. Then, there exists an $i \in \mathbb{N}$ such that $\lnsh$ is a leaf of $(\pi_i, L_i, c_i)$. Since $\lnsh$ remains a leaf in $\pi_\omega$, it was not unraveled at any subsequent stage, so $\lnsh \notin L_i$ and hence $\lnsh$ is an initial sequent.

Next, suppose $\branch = \{\lnsg_m\}_{m \in \mathbb{N}}$ is an infinite branch of $\pi_\omega$. Note that by construction of $\pi_\omega$, for every $j \in \mathbb{N}$ there exists a natural number $m_j$ such that $\lnsg_{m_j}$ is a bud of $(\pi_j, L_j, c_j)$ and the initial segment $\branch_j = \set{\lnsg_m}_{m \leq m_j}$ of $\branch$ is a branch of $\pi_j$. We now define an infinite looping path $\lnspath^\circ = \set{\lnsh_i}_{i \in \mathbb{N}}$ of $(\pi_0, L_0, c_0)$ such that $\lnspath^\circ$ corresponds to $\branch$. To do so, we construct an infinite sequence of finite looping paths $\lnsq^\circ_i$ of $(\pi_0, L_0, c_0)$. We proceed by induction on $i$ and show that invariant (II) holds, which is a conjunction of two statements: (i) $\lnsq_{i}^\circ$ contains $\lnsq_{i-1}^\circ$ as an initial segment (for $i > 0$) and (ii) $\lnsq^\circ_i$ corresponds to $\branch_i$.

Let $\lnsq_0^\circ := \branch_0$. By our previous observation $\lnsq_0^\circ$ is a branch of $\pi_0$ ending in a bud. Therefore, (II) holds. Now suppose we have constructed $\lnsq_{i}^\circ$ and (II) holds. Since $\branch_{i+1}$ is a branch of $\pi_{i+1}$, it is also a looping path of $(\pi_{i+1}, L_{i+1}, c_{i+1})$ and therefore corresponds to a looping path $\lnsq^\circ$ of $(\pi_0, L_0, c_0)$ by Lemma~\ref{cor: corresponding paths}. Since $\pi_\omega$ is a tree and $\lnsq^\circ_i$ corresponds to $\branch_i$, which is an initial segment of $\branch_{i+1}$, $\lnsq^\circ_i$ is an initial segment of $\lnsq^\circ$ and so we define $\lnsq^\circ_{i+1} := \lnsq^\circ$. Observe that (II) holds.

Finally, we let $\lnspath^\circ := \bigcup_{i \in \mathbb{N}} \lnsq^\circ_i$.\footnote{We view a sequence $\set{\lnsg_i}_{i \in I}$ of LNSs as a function which maps natural numbers in $I$ to LNSs.} Note that $\lnspath^\circ$ is a well-defined infinite looping path of $(\pi_0, L_0, c_0)$ by (i) and corresponds to $\branch$ by (ii). We now argue that $\branch$ is progressing.

Let $\lnspath^\circ_\infty$ be the set of buds through which $\lnspath^\circ$ passes infinitely often and let $\lnsh \in \lnspath^\circ_\infty$ be $\preceq$-maximal. By Definition~\ref{dfn:cyclic-proof}--(3), the path from $c(\lnsh)$ to $\lnsh$ has a continuous focus trace and passes through the right premise of a rule $\mathsf{r} \in \set{\untilli, \untillii}$ with the formula in focus principal. Since $\lnsh \in \lnspath^\circ_\infty$, $\lnspath^\circ$ passes infinitely often through $\lnsh$ and therefore infinitely often through the right premise of a rule $\mathsf{r} \in \set{\untilli, \untillii}$ with the formula in focus principal. Moreover, by Lemma~\ref{lem: stable annotation}, $\lnspath^\circ$ has a suffix with a continuous focus trace. Therefore, $\lnspath^\circ$ has a progressing suffix.

Since $\lnspath^\circ$ corresponds to $\branch$, we can transfer the progressing property from $\lnspath^\circ$ to $\branch$ using Definition~\ref{dfn:corresponds}. Let $\lnspath^\circ_{\geq j} = \set{\lnsh_i}_{i \geq j}$ be the progressing suffix of $\lnspath^\circ$ and let $\branch_{\geq j} = \set{\lnsg_i}_{i \geq j}$ be the corresponding suffix of $\branch$. By condition~(3), $\lnsg_i$ contains a formula in focus if and only if $\lnsh_i$ contains a formula in focus for each $i \geq j$. Since $\lnspath^\circ_{\geq j}$ has a continuous focus trace, $\branch_{\geq j}$ has a continuous focus trace as well. By conditions~(1) and~(2), whenever $\lnsh_i$ is the conclusion of a rule $\mathsf{r} \in \set{\untilli, \untillii}$ and $\lnsh_{i+1}$ is the right premise, we have that $\lnsg_i$ is the conclusion of the same rule $\mathsf{r}$ and $\lnsg_{i+1}$ is the right premise. Moreover, by condition~(3), if the principal formula of $\mathsf{r}$ applied to $\lnsh_i$ is in focus, then the principal formula of $\mathsf{r}$ applied to $\lnsg_i$ is in focus. Since $\lnspath^\circ$ passes infinitely often through the right premise of $\mathsf{r} \in \set{\untilli, \untillii}$ with the formula in focus principal, the same holds for $\branch$. We thus conclude that $\branch$ is progressing.

Finally, we argue that $\pi_\omega$ satisfies the SRP. By condition~(4), whenever $\lnsh_i$ is saturated, $\lnsg_i$ is saturated as well. Since $\lnspath^\circ$ passes infinitely often through the buds $\lnsh \in \lnspath^\circ_\infty$ and their companions, both of which are saturated by Definition~\ref{dfn:cyclic-proof}--(3), $\lnspath^\circ$ passes through infinitely many saturated LNSs. Therefore, $\branch$ passes through infinitely many saturated LNSs as well. Since this holds for every infinite branch $\branch$ of $\pi_\omega$, $\pi_\omega$ satisfies the SRP. We conclude that $\pi_\omega$ is an $\calci$-proof of $\lnsg$ which satisfies the SRP.
\end{proof}

We obtain the soundness of the cyclic LNS system as a corollary of Theorems~\ref{thm:soundness} and~\ref{thm:cyclic-to-non-wf}. 

\begin{corollary}[Soundness of $\calccyc$]\label{cor:soundness-cyclic-proofs}
If $\lnsg$ has an $\calccyc$-proof, then $\lnsg$ is valid.
\end{corollary}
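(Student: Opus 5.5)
The corollary follows by chaining two results already established: Theorem~\ref{thm:cyclic-to-non-wf} converts cyclic proofs into non-wellfounded ones, and Theorem~\ref{thm:soundness} gives soundness of $\calci$. Suppose $\lnsg$ has an $\calccyc$-proof $(\der_0,L_0,c_0)$. By Theorem~\ref{thm:cyclic-to-non-wf}, it can be transformed into an $\calci$-proof $\der_\omega$ of the same LNS $\lnsg$, and this proof moreover satisfies the SRP. We only need that $\der_\omega$ is an $\calci$-proof. The construction keeps the root fixed: each $\der_i$ in the unraveling sequence is a tree with root $\lnsg$, and $\der_\omega$ is their union. Hence $\calci \proves \lnsg$, and Theorem~\ref{thm:soundness} yields that $\lnsg$ is valid.

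One bookkeeping point should be checked explicitly, namely that $\der_\omega$ is a derivation in the sense required by Theorem~\ref{thm:soundness}. Every node of $\der_\omega$ already lies in some finite $\der_i$, and nodes are never removed in later stages. Consequently every internal node is the conclusion of a rule instance of $\calc$ whose premises are exactly its children, and these children are fixed from the stage at which the node first becomes internal. The remaining conditions, that leaves are initial sequents and that infinite branches are progressing, are exactly what Theorem~\ref{thm:cyclic-to-non-wf} establishes.

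Given the two theorems, nothing here is a genuine obstacle. All the substantive work, including the corresponding-paths argument and the progress of looping paths via Lemma~\ref{lem: stable annotation}, is already contained in Theorem~\ref{thm:cyclic-to-non-wf}.
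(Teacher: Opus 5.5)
Your proposal is correct and matches the paper's argument: the paper also obtains this corollary by composing Theorem~\ref{thm:cyclic-to-non-wf} (cyclic proofs unravel to $\calci$-proofs of the same root) with Theorem~\ref{thm:soundness}. Your extra check that $\der_\omega$ is a genuine derivation is sound, since later stages only paste subtrees above buds; it is a minor detail the paper states without proof ("Note that $\pi_\omega$ is a well-defined derivation").
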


%% file: sec-conclusion.tex
There are several directions for future research. First, it would be natural to investigate whether---and under what conditions---the proof transformations between non-wellfounded and cyclic proofs developed here can be extended to more expressive multisequent formalisms, such as nested and labeled sequents. Such generalizations would deepen our understanding of cycle recognition and unraveling beyond the linear nested setting. Second, while non-wellfounded Gentzen-style sequent calculi for $\ltl$ were given in~\cite{AloEtAl20,BruLan08,KokStu16}, the problem of syntactic cut-elimination was left open in all three works. We conjecture that the additional structural expressiveness of $\calci$ enables a syntactic cut-elimination proof, which we aim to investigate. Third, it would be interesting to study proof transformations between our linear nested sequent calculi and the existing Gentzen and tableau systems of~\cite{AloEtAl20,BruLan08,KokStu16,Rey16}. Such translations would clarify the relationship between these formalisms and allow results to be transferred between them---in particular, should cut-elimination be established for $\calci$, such 
translations could provide a pathway to resolving the analogous open problem for the Gentzen systems of~\cite{AloEtAl20,BruLan08,KokStu16}. Finally, while our contribution is primarily theoretical, the invertible and formula-driven nature of our cyclic system suggests applications to automated proof-search and proof certificates for $\ltl$.
